\documentclass[journal,twoside,web]{ieeecolor}

\usepackage{generic}
\usepackage{cite}
\usepackage{amsmath,amssymb,amsfonts}
\usepackage{bbm}

\usepackage{amsthm}

\usepackage{graphicx}
\usepackage{hyperref}
\usepackage{textcomp}
\usepackage[acronym]{glossaries}
\usepackage{dsfont}

\usepackage{rotating}
\usepackage{booktabs}

\usepackage{textcomp}
\usepackage{listings}
\usepackage{colortbl}
\usepackage{etoolbox}
\usepackage{graphicx}
\usepackage{xspace}
\usepackage{url}
\usepackage{booktabs}
\usepackage{subcaption}
\usepackage{cite}

\usepackage[utf8]{inputenc}
\usepackage{multirow}
\usepackage[T1]{fontenc}
    
\def\BibTeX{{\rm B\kern-.05em{\sc i\kern-.025em b}\kern-.08em
    T\kern-.1667em\lower.7ex\hbox{E}\kern-.125emX}}

\newcommand{\unsafe}{\mathrm{u}}
\newcommand{\safe}{\mathrm{s}}
\newcommand{\reach}{\mathrm{r}}
\newcommand{\initial}{\mathrm{0}}
\newcommand{\set}[1]{\{#1\}}

\newcommand{\algebraic}{X = \big \{ x \in \reals^n \mid h_i(x) \geq 0 \; \forall i \in \set{1, \ldots, l} \big \}}

\newcommand{\pw}{\mathbf{w}}
\newcommand{\px}{\mathbf{x}}

\newcommand{\pX}{\mathbf{x}}

\newcommand{\nx}{\mathbf{x}'}

\newcommand{\reals}{\mathbb{R}}
\newcommand{\naturals}{\mathbb{N}}
\newcommand{\N}{\mathcal{N}}
\newcommand{\hor}{H}
\newcommand{\RA}{\mathcal{R}}
\newcommand{\naturalszero}{\mathbb{N}_{\geq 0}}

\renewcommand{\L}{\mathcal{L}}

\newcommand{\low}[1]{\underline{#1}}

\newcommand{\Pfin}{P_{\reach}(X_{\reach}, X_u, X_0, \hor, \pi)}
\newcommand{\Pfinopt}{P_{\reach}(X_{\reach}, X_u, X_0, \hor, \pi^{\ast})}
\newcommand{\Pinf}{P_{\reach}(X_{\reach}, X_u, X_0, \pi)}

\theoremstyle{definition}
\newtheorem{definition}{Definition}
\newtheorem{problem}{Problem}

\theoremstyle{plain}
\newtheorem{theorem}{Theorem}
\newtheorem{corollary}[theorem]{Corollary}
\newtheorem{proposition}[theorem]{Proposition}
\newtheorem{lemma}[theorem]{Lemma}

\theoremstyle{plain}
\newtheorem{remark}{Remark}

\def\BibTeX{{\rm B\kern-.05em{\sc i\kern-.025em b}\kern-.08em
    T\kern-.1667em\lower.7ex\hbox{E}\kern-.125emX}}
\begin{document}
\title{Time-Varying Reach-Avoid Control Certificates for Stochastic Systems}
\author{Rayan Mazouz, Luca Laurenti, Morteza Lahijanian
\thanks{Rayan Mazouz is with the University of Colorado Boulder (e-mail: \href{mailto:rayan.mazouz@colorado.edu}{rayan.mazouz@colorado.edu}).}
\thanks{Luca Laurenti is with Delft University of Technology (e-mail:
\href{mailto:l.laurenti@tudelft.nl}{l.laurenti@tudelft.nl)}).}
\thanks{Morteza Lahijanian is with the University of Colorado Boulder (e-mail: 
\href{mailto:morteza.lahijanian@colorado.edu}{morteza.lahijanian@colorado.edu)}.}}

\maketitle

\begin{abstract}
Reach-avoid analysis is fundamental to reasoning about the safety and goal-reaching behavior of dynamical systems, and serves as a foundation for specifying and verifying more complex control objectives. This paper introduces a reach-avoid certificate framework for discrete-time, continuous-space stochastic systems over both finite- and infinite-horizon settings. We propose two formulations: time-varying and time-invariant certificates. We also show how these certificates can be synthesized using sum-of-squares (SOS) optimization, providing a convex formulation for verifying a given controller. Furthermore, we present an SOS-based method for the joint synthesis of an optimal feedback controller and its corresponding reach-avoid certificate, enabling the maximization of the probability of reaching the target set while avoiding unsafe regions. Case studies and benchmark results demonstrate the efficacy of the proposed framework in certifying and controlling stochastic systems with continuous state and action spaces.
\end{abstract}

\begin{IEEEkeywords}
Reach-Avoid Certificates, Stochastic Systems, Probabilistic Safety, Convex Optimization, Formal Verification
\end{IEEEkeywords}

\section{Introduction}


Ensuring the safe operation of autonomous systems in \emph{safety-critical} domains is a central objective in modern control and robotics \cite{lesser2013stochastic, li2021prediction}. Real-world systems operate under uncertainty and stochastic disturbances, making their behavior difficult to predict. It is therefore necessary to rigorously verify these systems or synthesize controllers that provide \emph{formal} probabilistic guarantees on their behavior. This task is particularly challenging due to the continuous nature of the action space, the evolution of the system over a continuous state space under possibly nonlinear dynamics and stochastic uncertainty, and the complexity of the specification requirements. A fundamental requirement in many applications is the \emph{reach–avoid} property, which simultaneously captures safety constraints and goal-reaching objectives. In this work, we address these challenges by developing a certificate-based framework that provides formal reach–avoid guarantees while enabling efficient synthesis of feedback controllers.

Existing methods for reach–avoid analysis of continuous-space, discrete-time stochastic systems generally fall into two categories. The first class consists of finite abstraction-based approaches \cite{lahijanian2015formal, gracia2025efficient, lunze2003discrete, coppola2024data, kariotoglou2017linear}, which approximate the system by an (uncertain) finite-state Markov process. These methods require discretizing the state and action spaces, which becomes computationally prohibitive in high dimensions. Moreover, they rely on computing bounds on the transition kernel between abstract states, which is especially challenging for nonlinear dynamics.
The second class comprises continuous certificates subject to reach–avoid constraints \cite{vzikelic2023learning, wicker2024probabilistic}. Obtaining such certificates requires solving a functional optimization problem under these constraints. However, existing formulations are limited to infinite-horizon settings and 
become conservative when cast as convex optimization problems.
The control synthesis component further exacerbates this challenge by adding additional non-convexity.
To circumvent this issue, existing works often rely on deep learning to approximate the certificate and controller, but such approaches remain difficult to verify against hard constraints.

In this paper, we focus on reach–avoid certificates for discrete-time nonlinear stochastic systems with continuous state and action spaces. We introduce a formulation of such certificates that is grounded in Bellman's optimality principle, i.e., dynamic programming (DP),  and bridges the gap between classical value-iteration-based methods and analysis over continuous domains. Our framework applies to both \emph{finite-} and \emph{infinite-horizon} settings and accommodates both \emph{time-invariant} and \emph{time-varying} certificates.
Moreover, in contrast to \cite{vzikelic2023learning, wicker2024probabilistic}, our formulation naturally admits a convex optimization structure without introducing additional conservatism: by restricting the certificate to polynomial functions, we can leverage semidefinite programming (SDP) techniques~\cite{vandenberghe1996semidefinite}. In particular, we employ Sum-of-Squares (SOS) optimization for synthesis of both certificate and controller. 
%
%
We evaluate and compare these certificate formulations and synthesis frameworks on several linear and nonlinear systems, confirming the theoretical guarantees and illustrating their characteristics and trade-offs.

In short, the main contributions are:
\begin{itemize}
    \item A unified formulation of reach–avoid certificates, both \emph{time-varying} and \emph{time-invariant}, for discrete-time, continuous-space stochastic systems, applicable to both \emph{finite}- and \emph{infinite-horizon} settings, with guaranteed lower bounds on the probability of success.
     \item A convex optimization framework for joint synthesis of the reach-avoid certificate and feedback controller via SOS polynomials.
    \item Validation of the theoretical guarantees,
    along with benchmarking comparisons that highlight the trade-offs among the proposed certificate formulations and their advantages over state-of-the-art formulations.
\end{itemize}

\subsection*{Related Work}
Ensuring \emph{reach-avoid} property in stochastic systems requires methods that remain robust under uncertainty. Many existing approaches focus solely on the analysis of safety. Stochastic barrier functions are especially powerful in this regard, as they can reduce safety verification and control design to a convex optimization problem. For verification, these barrier certificates provide formal probabilistic safety guarantees for systems with process noise \cite{SANTOYO2021109439, mazouz2022safety, mazouz2024piecewise}, while control barrier functions enable safe controller synthesis for stochastic processes \cite{mazouz2024data, jagtap2020formal,  vahs2023belief, mazouz2025piecewise}. However, distinct from this work, these methods do not explicitly incorporate a \emph{reach} objective into their formulation and therefore do not address the full reach–avoid properties common in real-world applications.

In fact, reach–avoid properties generalize both reachability and safety objectives \cite{summers2010verification} and are foundational for complex requirements such as the ones expressed in temporal logics.
Much of the existing literature approach this problem through finite abstraction of the continuous-space, continuous-action  system via discretization. In \cite{lahijanian2015formal},  interval Markov decision process (IMDP) abstractions of such stochastic systems are introduced, along with methods for verification, controller synthesis, and refinement. In \cite{gracia2025efficient}, an approach is developed for synthesizing distributionally robust reach-avoid controllers for switched stochastic systems. Work \cite{kariotoglou2017linear} formulates the stochastic reach-avoid control problem via an LP with finite-dimensional approximations. In \cite{coppola2024data}, data-driven abstractions of stochastic systems are used to synthesize controllers that satisfy reach-avoid specifications.
In this work, we focus on an approach that does not require space or action discretization.

Approaches that operate directly in continuous spaces use certificate formulations for discrete-time~\cite{vzikelic2023learning,xue2025finite} and continuous-time~\cite{neustroev2025neural} stochastic systems. 
Closest to our work are~\cite{xue2025finite,vzikelic2023learning}, which derive time-invariant reach-avoid certificates for finite~\cite{xue2025finite} and infinite horizons~\cite{vzikelic2023learning}.
In~\cite{xue2025finite}, barrier-like conditions yield upper and lower bounds on finite-horizon reach-avoid probabilities. However, these bounds tend to be conservative and the resulting non-convex certificate synthesis limits scalability to low-dimensional systems. 
%
The work in~\cite{vzikelic2023learning} uses supermartingale theory to derive the certificate and proposes a joint learning framework for neural network certificates and control policies via alternating training and verification. However, it is limited by (i) the time-invariant structure, which requires highly expressive (large) neural networks, and (ii) the scalability of neural network verification methods, which are restricted to simple architectures.

To address limitation (i), we develop conditions for both time-varying and time-invariant certificates for finite- and infinite-horizon properties, derived from a DP formulation, which lead to tighter bounds on the reach–avoid probability bound. We show that the certificate in~\cite{vzikelic2023learning} is a special case of our formulation 
and our probability bounds are tighter than the one in \cite{xue2025finite}.
To address limitation (ii), we focus on convex formulations of the functional optimization problem for the joint synthesis of the certificate and controller, avoiding the need for NN verification.  Specifically, we focus on SOS programs, and show that our certificate formulations are particularly suitable for such templates.



Finally, we note that there are other methods that leverage non-convex optimization techniques to synthesize certificates, specifically satisfiability modulo theory (SMT) based approaches for reach-avoid verification and controller synthesis \cite{fan2021controller,mufid2021smt,cao2025comparative}. In \cite{fan2021controller}, a sound control synthesis algorithm is presented for deterministic linear systems with reach-avoid specifications, formulated as a satisfiability problem. Similarly, \cite{mufid2021smt} performs reachability analysis on deterministic high-dimensional linear systems using SMT-based reasoning. A comparative analysis between barrier-like functions methods for stochastic reach-avoid verification is provided in \cite{cao2025comparative}, evaluating the practical performance of SDP- and SMT-based approaches. However, none of these methods explicitly incorporate time-varying conditions that capture the temporal structure inherent in reach-avoid problems, and certificate and controller synthesis are performed separately rather than jointly optimized.
Moreover, while ensuring numerical soundness, SMT-based methods suffer from poor scalability and lack of completeness since constraint solving over nonlinear real arithmetic is undecidable \cite{gao2013dreal}.

In contrast, in this work we focus on convex optimization-based formulations to ensure tractability and to enable the simultaneous synthesis of both the certificate and the controller within a unified SOS-based convex program. Importantly, concerns regarding the numerical stability and validity of SOS optimization tools have been addressed in recent years through the development of more numerically stable SOS solvers \cite{legat2017sos, mazouz2026stochasticbarrier} and methods for certifying SOS programs \cite{wajid2025successive, panja2025correct}. As a result, recent stochastic barrier approaches \cite{mazouz2026stochasticbarrier} demonstrate significantly improved reliability and performance compared to earlier methods. These advances further motivate our choice of SOS program as a principled and computationally tractable framework for synthesizing certificates. Nevertheless, we emphasize that our certificate formulations are independent of the particular optimization method used.

\section{Problem Formulation}

We consider a discrete-time, continuous-space stochastic system given by
\begin{equation}
\label{eq:system}
    \px_{k+1} = f(\px_k, u_k, \pw_k),
\end{equation}
where $\px_k \in X \subseteq \mathbb{R}^n$ denotes the state and $u_k \in U \subset \mathbb{R}^m$ is the control input at time step $k \in \naturalszero$. The disturbance $\pw_k\in W \subseteq \mathbb{R}^w$ is an i.i.d. random variable with distribution $p_{\mathbf{w}}$, assumed to admit a closed-form moment generating function (e.g., $\pw_k \sim \mathcal{N}(0,\Sigma)$). 
The system dynamics are governed by a polynomial map $f: X \times U \times W \to \mathbb{R}^n$ which is continuous in its arguments. 

The choice of control $u_k$ is determined by a time-dependent \emph{feedback controller} $\pi: X \times \naturals \to U$.
Given a state $x \in \mathbb{R}^n$, a controller $\pi$, and a Borel-measurable set $X' \subseteq X$, we define the transition kernel of System~\eqref{eq:system} as
\begin{equation}
\label{eq:transition_kernel}
    T(X' \mid x,\pi) := \int_{\mathbb{R}^{\mathrm{w}}} \mathbf{1}_{X'}(f(x,\pi(x), \pw ) \, p_{\mathbf{w}}(w) \, dw,
\end{equation}
where $\mathbf{1}_{X'}(\cdot)$ denotes the indicator function of the set $X'$, taking the value 1 when its argument belongs to $X'$, and 0 otherwise.

Under a controller $\pi$, from an initial state $x_0 \in X$, this transition kernel induces a well-defined probability measure $\mathrm{Pr}^{x_0}$ over the trajectories of System~\eqref{eq:system}~\cite{bertsekas2004stochastic}. For any time-step $k \in \mathbb{N}_0$, initial state $x_0 \in \mathbb{R}^n$, and measurable sets $X_0, X_k \subseteq X$, the measure satisfies
\begin{subequations}
\label{eq:prob measure}
    \begin{align}
        &\mathrm{Pr}^{x_0}[\pX_0 \in X_0] = \mathbf{1}_{X_0}(x_0), \\
        &\mathrm{Pr}^{x_0}[\pX_k \in X_k \mid \pX_{k-1} = x, \, \pi] = T(X_k \mid x, \pi(x)).
    \end{align}
\end{subequations}
The probability measure $\mathrm{Pr}$ enables probabilistic reasoning over the stochastic trajectories of System~\eqref{eq:system}. 

%

We are interested in the temporal properties of System~\eqref{eq:system}. In this work, we specifically focus on \textit{reach-avoid} properties, which serve as a fundamental building block for more complex specifications, such as those expressed using temporal logics (e.g., LTL and its variants).  

\begin{definition}[Probabilistic Reach-Avoid]
    \label{def: probabilistic reach-avoid}
    Let $X_\safe \subset X$ represent the safe set, $X_0 \subseteq X_\safe$ the initial set, and $X_\reach \subseteq X_\safe$ the desirable goal (reach) set. Define unsafe set by $X_\unsafe := X \setminus X_\safe$. 
    Then, given controller $\pi$ and horizon $\hor \in \mathbb{N}_0 \cup \{\infty\}$, \emph{probabilistic reach-avoid} is defined as
    \begin{multline*}
        \Pfin = \inf_{x_0 \in X_0} \mathrm{Pr}^{x_0}[ (\exists k \leq \hor, \ \pX_k \in X_\reach) \\
        \wedge 
        (\forall k' < k, \ \pX_{k'} \not\in X_\unsafe)]. 
    \end{multline*}
    When $\hor = \infty$, we use the simplified notation $\Pinf. $
\end{definition}

In this work, we consider the following verification problem.

\begin{problem}[Reach-Avoid Verification]
\label{Prob:ReachAvoidCertification}
Consider System~\eqref{eq:system} with semi-algebraic control set $U$ and 
let $X_\safe \subset \reals^n$,
$X_0 \subseteq X_\safe$,  and
$X_\reach \subseteq X_\safe$ be semi-algebraic sets that define the safe, initial, and target regions per Definition~\ref{def: probabilistic reach-avoid}.
For a given feedback controller $\pi$, horizon $\hor \in \mathbb{N}_0 \cup \{\infty\}$, and probability threshold $\delta \in [0,1]$,
verify that
    \begin{align*}
        \Pfin \geq \delta.
    \end{align*}
\end{problem}


We also consider the problem of generating a controller that guarantees a lower-bound on the reach-avoid probability.

\begin{problem}[Reach-Avoid Control Synthesis]
\label{Prob:ReachAvoidSynthesis}
Consider the setting in Problem~\ref{Prob:ReachAvoidCertification}.  Synthesize a feedback controller $\pi^*$ for System~\eqref{eq:system} such that
    \begin{align*}
    \Pfinopt \geq \delta.
    \end{align*}
\end{problem}

The above problems are challenging because the system is stochastic and evolves in a continuous state space under nonlinear (polynomial) dynamics $f$. Moreover, the reach–avoid objective requires reasoning about two distinct sets, $X_u$ and $X_r$, simultaneously, which complicates the analysis. These factors together make the exact computation of reach–avoid probabilities nontrivial. In addition, synthesizing a feedback controller over a continuous control space $U \subset \reals^m$ further compounds the computational difficulty.

To obtain a tractable approach, we first derive a form of certificate that is amenable to computation. Specifically, we introduce \emph{time-varying} and \emph{time-invariant} reach–avoid certificates, defined as real-valued functions that satisfy a set of sufficient conditions. We then show that any such certificate yields a lower bound on $P_\reach$, thereby reformulating Problem~\ref{Prob:ReachAvoidCertification} as a functional optimization problem that seeks a certificate maximizing this bound. Likewise, Problem~\ref{Prob:ReachAvoidSynthesis} becomes a joint optimization problem over both the certificate and the controller. By restricting the certificate to the class of polynomials, we show that both problems admit convex relaxations in the form of semidefinite programs using SOS.

\begin{remark}
    We assume polynomial dynamics $f$ in System~\eqref{eq:system} for clarity of presentation; however, the proposed approach can be extended to general nonlinear dynamics by, e.g.,  constructing (piecewise) convex relaxations of $f$.
\end{remark}
\section{Preliminaries: Non-negative Polynomials}
\label{sec:SOS}

In this section, we provide a brief background on non-negative and SOS polynomials, which are central components of our approach.



\begin{definition}[SOS Polynomial]
\label{def:sosp1}
A multivariate polynomial $\lambda(x)$ is a Sum-Of-Squares (SOS) for $x \in \mathbb{R}^{n}$ if there exist polynomials $\lambda_{i}$, $i = 1, \ldots, r$, for some ${r \in \naturals}$, such that
$\lambda(x) = \sum_{i=1}^{r} \lambda_{i}^{2}(x).$
If $\lambda(x)$ is an SOS, then $\lambda(x) \geq 0$ for all $x \in \mathbb{R}^{n}$. The set of SOS polynomials is denoted by $\Lambda$.
\end{definition}

Now, consider a basic closed semi-algebraic set $\algebraic$, defined as the intersection of $l$ polynomial inequalities. Proposition \ref{prop:sosp1} provides a method to enforce non-negativity on such sets.
\begin{proposition}[\cite{Stengle1974}]
\label{prop:sosp1}
Let $X \subset \reals^n$ be a compact basic semi-algebraic set defined as
$\algebraic,$
where $h_i(x)$ is a polynomial. Then, polynomial $\gamma(x)$ is non-negative on $X$ if, for some $\lambda_{0}(x), \lambda_{i}(x) \in \Lambda$,
$\gamma (x) = \lambda_0(x) + \sum_{i=1}^{l} \lambda_{i}(x) h_{i}(x).$
\end{proposition}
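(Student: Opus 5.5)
The statement is the easy direction of a Positivstellensatz. It asserts only that the SOS representation is a \emph{sufficient} condition for non-negativity on $X$, so no deep algebraic geometry is needed. Compactness of $X$ plays no role in this direction; it matters only for converse statements (Putinar-type), which we do not need. The plan is a direct pointwise evaluation argument that uses Definition~\ref{def:sosp1}.

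First, we record that every element of $\Lambda$ is globally non-negative. If $\lambda(x)=\sum_{j=1}^{r}\lambda_j^2(x)$, then each summand is a square of a real number at every $x\in\reals^n$. Hence $\lambda(x)\geq 0$ on all of $\reals^n$, and in particular on $X$. We apply this to $\lambda_0$ and to each multiplier $\lambda_i$, $i=1,\ldots,l$.

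Next, we fix an arbitrary $x\in X$. By the definition of $X$ we have $h_i(x)\geq 0$ for every $i\in\set{1,\ldots,l}$. Each product $\lambda_i(x)h_i(x)$ is therefore a product of two non-negative reals, and so is non-negative. Evaluating the assumed identity $\gamma=\lambda_0+\sum_{i=1}^{l}\lambda_i h_i$ at $x$ gives
\begin{equation*}
\gamma(x)=\lambda_0(x)+\sum_{i=1}^{l}\lambda_i(x)h_i(x)\geq 0 .
\end{equation*}
The left side is a sum of non-negative terms. Since $x\in X$ was arbitrary, $\gamma$ is non-negative on $X$.

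There is no genuine obstacle. The only care needed is to note that the identity is a polynomial identity, so it can be evaluated pointwise. We should also stress that the conclusion is one-directional: the converse, i.e.\ existence of such a certificate for every non-negative (or strictly positive) $\gamma$, is the nontrivial part. That part would require Stengle's or Putinar's theorem together with compactness or Archimedean assumptions, and the proposition does not assert it.
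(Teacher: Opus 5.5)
Your proof is correct and complete. The paper does not prove this proposition; it only attributes it to Stengle's Positivstellensatz. You instead verify the stated direction directly: each $\lambda_i\in\Lambda$ is globally non-negative, each $h_i$ is non-negative on $X$, and evaluating the polynomial identity at an arbitrary $x\in X$ gives $\gamma(x)\geq 0$.

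Your route fits what is actually claimed better than the citation does. Stengle-type results (and Putinar's, which the paper invokes in Lemma~\ref{lemma:zero-dual}) concern the converse: that such certificates exist for non-negative or strictly positive $\gamma$. That converse is where compactness or an Archimedean hypothesis genuinely enters, and it is what justifies searching over such certificates without excessive loss. The soundness direction you prove is all that the SOS constraints in Theorem~\ref{th:one-shot-certificate} rely on. As you correctly note, it holds without any compactness assumption.
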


\begin{corollary}
\label{cor:sosp1}
Let $\gamma(x)$ be a non-negative polynomial on $\algebraic,$ and $h(x)$ be an $l$-dimensional vector of polynomials with $h_i(x)$ as its $i$-th component.
Further, let $\L$ be an $l$-dimensional vector of SOS polynomials, i.e., the $i$-th element of $\L$ is $\lambda_i(x) \in \Lambda$.
Then, it holds that $ \gamma(x) - \mathcal{L}^{T} h(x) \in \Lambda$.
\end{corollary}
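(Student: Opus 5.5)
The plan is to read the corollary as an existence statement: if $\gamma$ is non-negative on the compact basic semi-algebraic set $X$, then there is a vector $\L$ of SOS multipliers $\lambda_i \in \Lambda$ with $\gamma - \L^T h \in \Lambda$. The corollary should then follow directly from a Positivstellensatz-type representation of $\gamma$ in the form of Proposition~\ref{prop:sosp1}. Once we have
\begin{equation*}
\gamma(x) = \lambda_0(x) + \sum_{i=1}^{l} \lambda_i(x) h_i(x), \qquad \lambda_0,\lambda_i \in \Lambda,
\end{equation*}
we set $\L := (\lambda_1,\dots,\lambda_l)^T$. Then $\L^T h = \sum_i \lambda_i h_i$, so $\gamma - \L^T h = \lambda_0 \in \Lambda$, which is exactly the claim. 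So everything reduces to producing the multipliers $\lambda_0,\lambda_i$ from the hypothesis $\gamma \geq 0$ on $X$.

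Proposition~\ref{prop:sosp1} as written only gives the sufficient direction, so I would use the converse Positivstellensatz literature it rests on, namely Stengle and, in its quadratic-module form, Putinar. First I would establish the case where $\gamma > 0$ on $X$. Compactness of $X$ lets us add the redundant constraint $h_{l+1}(x) := R^2 - \|x\|^2 \geq 0$ for $R$ large enough. This does not change $X$, and it makes the quadratic module generated by $h_1,\dots,h_{l+1}$ Archimedean. Putinar's theorem then gives SOS $\lambda_0,\dots,\lambda_{l+1}$ with $\gamma = \lambda_0 + \sum_{i=1}^{l+1}\lambda_i h_i$. We absorb $h_{l+1}$ into $h$, which is harmless because it is one more polynomial describing the same set, and conclude as above.

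The main obstacle is the boundary case, where $\gamma \geq 0$ on $X$ but $\gamma$ vanishes somewhere on $X$. There, Putinar's representation can fail with bounded degrees. Stengle's Positivstellensatz alone only yields a representation with a multiplier on $\gamma$ (of the form $p\gamma = \gamma^{2s} + q$ with $p,q$ in the preordering), not the plain form required. I would try two routes in order. First, check whether one of the known conditions for Putinar's theorem to extend to non-negative polynomials holds, for instance Nie's boundary Hessian or optimality conditions at the zeros of $\gamma$; under those conditions the representation exists exactly. Second, if that fails, apply the strict case to $\gamma + \varepsilon$ for every $\varepsilon > 0$. This gives $\gamma + \varepsilon - \L_\varepsilon^T h \in \Lambda$, which is the form actually used when the corollary is turned into SOS constraints in the synthesis programs, since an arbitrarily small slack does not change the resulting certificate bounds. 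I expect the proof in the paper to reduce to the strict/Archimedean case with this reading, and I would state that assumption explicitly when invoking the result.
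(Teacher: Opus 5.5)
Your proposal has a genuine gap: it targets the converse of what the paper needs, and that converse is false under the stated hypotheses. You read the corollary as an existence (Positivstellensatz) claim: $\gamma\ge 0$ on $X$ implies that SOS multipliers $\lambda_1,\dots,\lambda_l$ exist with $\gamma-\L^T h\in\Lambda$. As you concede, you cannot close this. A counterexample: take $l=1$ and $h(x)=-x^2$. Then $X=\{0\}$ is compact, and the quadratic module is even Archimedean, since $N-x^2=N+1\cdot h$. The polynomial $\gamma(x)=x$ is non-negative on $X$. Yet $x=\lambda_0-\lambda_1x^2$ with $\lambda_0,\lambda_1\in\Lambda$ is impossible: $\lambda_0(0)=0$ forces $x^2\mid\lambda_0$, so the right-hand side is divisible by $x^2$. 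Even for strictly positive $\gamma$, Putinar's theorem needs the quadratic module generated by the \emph{given} $h$ to be Archimedean, and compactness alone does not provide this. Appending $R^2-\|x\|^2$ to $h$ changes the fixed vector $h$ in the statement, and the $\gamma+\varepsilon$ route proves a different statement. So none of your routes proves a claim matching the corollary.

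The paper gives no separate argument. The corollary is meant as the constraint-form restatement of Proposition~\ref{prop:sosp1}, which is a sufficiency result, and the paper only ever invokes it in that direction. In the proof of Theorem~\ref{th:one-shot-certificate}, for example, membership of $-\RA(x,i)+\alpha_i-\mathcal{L}^{i}_{\unsafe}(x)^{\top}h_{\unsafe}(x)$ in $\Lambda$ is used to conclude $\RA(x,i)\le\alpha_i$ on $X_\unsafe$. The content actually needed is: if $\lambda_1,\dots,\lambda_l\in\Lambda$ and $\gamma-\L^T h=\lambda_0\in\Lambda$, then $\gamma\ge 0$ on $X$. This takes one line. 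For $x\in X$, each $h_i(x)\ge 0$ and each $\lambda_i(x)\ge 0$, so $\gamma(x)=\lambda_0(x)+\sum_{i=1}^{l}\lambda_i(x)h_i(x)\ge 0$. No compactness, Archimedean condition, or Positivstellensatz is involved. (Read literally, with $\L$ arbitrary, the printed wording is simply false: $h(x)=1-x^2$, $\gamma\equiv 1$, $\L=2$ gives $2x^2-1\notin\Lambda$.) Your expectation that the paper reduces to the strict/Archimedean case is therefore mistaken. Existence of multipliers would only bear on how conservative the SOS relaxation is, not on the soundness the paper relies on.
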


\section{Reach-Avoid Certificates}
\label{sec: certificate formulations}


In this section, we present general conditions for stochastic reach–avoid certificates without assuming algebraic sets or polynomial dynamics. 

To formally reason about reach–avoid properties, we seek functions that certify a lower bound on the probability that trajectories starting in the initial set $X_0$ reach the target set $X_r$, while remaining within the safe set $X_s$ (i.e., without entering the unsafe set $X_u$). 
Inspired by the Dynamic Programming (DP) perspective on stochastic barrier functions introduced in~\cite{laurenti2025unifying}, 
we derive two classes of reach-avoid certificates: \emph{time-varying} and \emph{time-invariant}, each applicable to both finite- and infinite-horizon settings.
Theorem~\ref{th:reach-avoid value function} establishes the formulation of the \emph{time-varying} certificate.


\begin{theorem}[Time-Varying, Bounded-Horizon Reach-Avoid Certificate]
\label{th:reach-avoid value function}
    Consider System~\eqref{eq:system} 
    under feedback controller $\pi$, 
    with a reach-avoid property defined by safe set $X_s$, initial set $X_0$, target set $X_\reach$, unsafe set $X_u$, and horizon $\hor \in \naturals_0$ per Definition~\ref{def: probabilistic reach-avoid}. Let $\nx = f(x, \pi, \textbf{w})$. Then, the real-valued function $\RA : \mathbb{R}^n \times \naturals^{\leq \hor}_{\geq 0} \to \mathbb{R}_{\geq 0}$ that is continuous in its first argument is a \emph{reach-avoid} certificate if, for some constants $\alpha_0, \alpha_i,\beta_i,\gamma \in \reals_{\geq 0}$ with $i \in \naturals^{\leq \hor}_{\geq 1}$, it satisfies 
    \begin{subequations}
        \label{eq: certificate tv finite H}
        \begin{align}
            \!\!\!\!\!&\RA(x, i) \geq 0 && \forall x \in \mathbb{R}^n, \ \forall i \in \naturals^{\leq \hor}_{\geq 0} \label{eq:nonneg-tv} \\
            &\RA(x, i) \leq 1 + \alpha_i && \forall x \in X_\reach, \ \forall i \in \naturals^{\leq \hor}_{\geq 0} \label{eq:reach-tv} \\
            &\RA(x, i) \leq \alpha_i && \forall x \in X_\unsafe, \ \forall i \in \naturals^{\leq \hor}_{\geq 0} \label{eq:nsafe-tv} \\
            &\RA(x, 0) \leq \alpha_0 && \forall x \in X_\safe\setminus X_\reach,  \label{eq:initial-nsafe-tv} \\
            &\RA(x, i) \leq \mathbb{E}[\RA(\px', i-1) | x ]  && \nonumber \\
            &  \qquad \qquad  - \alpha_{i-1} + \alpha_i + \beta_i 
                 && \forall x \in X_\safe \! \setminus \! X_\reach,  \forall i \in \naturals^{\leq \hor}_{\geq 1} \label{eq:exp-tv}\\
            &\RA(x, \hor) \geq \gamma && \forall x \in X_0. \label{eq:Vinitial_ss}
        \end{align}
    \end{subequations}
    Then, the reach-avoid probability is lower-bounded by
    \begin{align}
        \label{eq: probability bound for TV reach-avoid}
        \Pfin \geq \gamma - \left( \alpha_\hor + \sum_{i=1}^{ \hor} \beta_i
        \right ).
    \end{align}  
\end{theorem}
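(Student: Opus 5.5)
The plan is to compare $\RA$ with the exact finite-horizon reach-avoid value function, obtained by dynamic programming, and to show by induction on the number of remaining steps that a shifted version of $\RA$ lower-bounds it. The index $i$ in $\RA(x,i)$ is read as ``$i$ steps to go'', so that $\RA(\cdot,\hor)$ is evaluated at the initial time. For $i \in \naturals^{\leq \hor}_{\geq 0}$, let $V_i(x)$ be the probability that a trajectory starting at $x$ reaches $X_\reach$ within $i$ steps without first entering $X_\unsafe$.

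\emph{DP recursion.} Following the standard construction of the trajectory measure via~\eqref{eq:prob measure} (see~\cite{bertsekas2004stochastic}), I will verify that:
\begin{itemize}
\item $V_0 = \mathbf{1}_{X_\reach}$;
\item for $i\geq 1$, $V_i(x)=1$ on $X_\reach$, $V_i(x)=0$ on $X_\unsafe$, and $V_i(x)=\mathbb{E}[V_{i-1}(\px')\mid x]$ on $X_\safe\setminus X_\reach$.
\end{itemize}
This follows by conditioning on the first transition and using the Markov property and the i.i.d. noise. By definition, $\Pfin = \inf_{x_0\in X_0} V_\hor(x_0)$. Measurability of each $V_i$ is inherited from the measurable kernel $T$. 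Nonnegativity and continuity of $\RA$ guarantee that the expectations $\mathbb{E}[\RA(\px',i-1)\mid x]$ are well defined, possibly as $+\infty$, in which case the corresponding condition is vacuous. I expect this step to be the main technical obstacle: the rest is elementary. I would handle it by citing the standard DP characterization rather than reproving it. A side issue is that $\pi$ is time-dependent; I will take $\pi$ at time $\hor-i$ inside the expectation at stage $i$, which does not affect the argument.

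\emph{Inductive claim.} I will show that for all $x\in\reals^n$ and all $i\le\hor$,
\begin{equation*}
\RA(x,i) - \alpha_i - \textstyle\sum_{j=1}^{i}\beta_j \;\le\; V_i(x).
\end{equation*}
\begin{itemize}
\item \emph{Base case $i=0$.} On $X_\reach$, \eqref{eq:reach-tv} gives $\RA(x,0)-\alpha_0\le 1=V_0(x)$. On $X_\unsafe$, \eqref{eq:nsafe-tv} gives $\RA(x,0)-\alpha_0\le 0=V_0(x)$. On $X_\safe\setminus X_\reach$, \eqref{eq:initial-nsafe-tv} gives $\RA(x,0)-\alpha_0\le 0=V_0(x)$. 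These three sets cover $\reals^n$ (taking $X=\reals^n$ as implicit in the domain of $\RA$).
\item \emph{Inductive step $i\ge1$.} On $X_\reach$ and $X_\unsafe$ the claim follows exactly as in the base case from \eqref{eq:reach-tv} and \eqref{eq:nsafe-tv}, since $\beta_j\ge0$. On $X_\safe\setminus X_\reach$, condition \eqref{eq:exp-tv} gives
\begin{equation*}
\RA(x,i)-\alpha_i-\textstyle\sum_{j=1}^{i}\beta_j \le \mathbb{E}\big[\RA(\px',i-1)-\alpha_{i-1}-\textstyle\sum_{j=1}^{i-1}\beta_j \,\big|\, x\big].
\end{equation*}
Applying the induction hypothesis inside the expectation, which is monotone, bounds the right-hand side by $\mathbb{E}[V_{i-1}(\px')\mid x]=V_i(x)$.
\end{itemize}

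\emph{Conclusion.} Set $i=\hor$ and take $x_0\in X_0$. Condition \eqref{eq:Vinitial_ss} gives
\begin{equation*}
V_\hor(x_0)\ge \gamma-\alpha_\hor-\textstyle\sum_{i=1}^{\hor}\beta_i .
\end{equation*}
Taking the infimum over $x_0\in X_0$ yields \eqref{eq: probability bound for TV reach-avoid}. The nonnegativity condition \eqref{eq:nonneg-tv} is used only to make the expectations well defined.
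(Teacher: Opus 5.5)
Your proposal is correct and follows the paper's proof essentially step for step. It uses the same DP value function with $V_0=\mathbf{1}_{X_\reach}$, the same inductive claim $V_k(x)\ge \RA(x,k)-\alpha_k-\sum_{i=1}^k\beta_i$, the same application of \eqref{eq:exp-tv} on $X_\safe\setminus X_\reach$, and the same final infimum over $X_0$. Your uniform (rather than case-split) induction hypothesis and your explicit treatment of the time index of $\pi$ and of the covering $X=\reals^n$ only tidy up details that the paper leaves implicit.
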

\begin{proof}
    The proof follows from the DP formulation for reach-avoid probability.  Let 
    $V_k(x) = P_\reach(X_r, X_u, \{x\}, X_0, k, \pi)$ be the probability of satisfying the reach-avoid property from $x \in \reals^n$ in $k$ time steps \cite{wicker2024probabilistic}.  Then, $V_k(x)$ is given by the following DP
    \begin{align}
        \label{eq: value function}
        V_{k}(x) = 
        \begin{cases}
           1 & \text{if } x \in X_r\\
           0 & \text{if } x \in X_u\\
           0 & \text{if } x \in (X_s \setminus X_r) \land k = 0\\
           \mathbb{E}[V_{k-1}(x') \mid x] & \text{if } x \in (X_s \setminus X_r) \land k > 0\\
        \end{cases}
    \end{align}
    Next, we show by induction that $V_k (x) \geq \RA(x,k) - \alpha_k - \sum_{i=1}^k \beta_i$ for all $k \geq 0$.
    First note that, for $k = 0$, $V_0(x) \geq \RA(x,0) - \alpha_0$ for all $x \in \reals^n$.  
    Next, assume that
    \begin{align*}
        V_{k}(x) \geq 
            \begin{cases}
               \RA(x,k) - \alpha_k & \text{if } x \in X_r\\
               \RA(x,k) - \alpha_k & \text{if } x \in X_u\\
               \RA(x,k) - \alpha_k - \sum_{i=1}^{k} \beta_i & \text{if } x \in (X_s \setminus X_r).
            \end{cases}
    \end{align*}
    Then, it follows that 
    \begin{align*}
    V_{k+1}(x) \geq 
        \begin{cases}
           \RA(x,k+1) - \alpha_{k+1} \hspace{25mm}  \text{if } x \in X_r\\
           \RA(x,k+1) - \alpha_{k+1} \hspace{25mm}  \text{if } x \in X_u\\
           \mathbb{E}[\RA(x',k) \mid x] - \alpha_k - \sum_{i=1}^{k} \beta_i  
            \\
           \quad \geq \RA(x,k+1) - \alpha_{k+1} - \sum_{i=1}^{k+1} \beta_i  \\\hspace{44mm} \text{if } x \in (X_s \setminus X_r)
        \end{cases}
    \end{align*}    
    Hence, it holds that
    \begin{align*}
       \Pfin & \geq \inf_{x \in X_0} V_H(x) \\
        & \geq \inf_{x \in X_0} \left(\RA(x,\hor) - \alpha_\hor - \sum_{i=1}^{\hor} \beta_i \right) \\
        & = \gamma - \left(\alpha_\hor + \sum_{i=1}^{\hor} \beta_i \right)
    \end{align*}
\end{proof}

From the DP perspective, the reach–avoid probability of System~\eqref{eq:system}, from a state $x \in X$ at time $k$, is given by the value function $V_k(x)$ in \eqref{eq: value function}. Intuitively, Theorem~\ref{th:reach-avoid value function} provides conditions under which a certificate $\RA$ yields a lower bound on $V_k$. However, note that $V_0$ is an indicator function at $k=0$, which is difficult to approximate with a continuous function. To address this, 
in constraints~\eqref{eq: certificate tv finite H}
we introduce a slack variable $\alpha_0 > 0$ that allows for a conservative relaxation of the indicator.

We extend this relaxation to every time step, by introducing slack variables $\alpha_k > 0$ to conservatively relax the hard constraints that $V_k$ takes values $1$ on $X_r$ and $0$ on $X_u$. Additionally, we introduce $\beta_i > 0$ in constraint \eqref{eq:exp-tv} to relax the requirement that $V_k(x) = \mathbb{E}[V_{k-1}(x') \mid x]$ on $X_s \setminus X_r,$
a condition that is otherwise difficult to enforce directly through optimization.
Putting these together, the conditions in \eqref{eq: certificate tv finite H} guarantee that $V_k(x) \ge \RA(x,k) - \alpha_k$ on $X_r$ and $X_u$, and
$V_k(x) \ge \RA(x,k) - \alpha_k - \sum_{i=1}^k \beta_i$ on $X \setminus X_r$.
In other words, these bounds ensure that the certificate $\RA$ provides a valid, conservative lower bound on the reach–avoid value function.


Based on this intuition, we can then easily extend the 
the bounded-horizon certificate in Theorem~\ref{th:reach-avoid value function} to unbounded-horizon by ensuring that $\beta_k+\alpha_k = 0$ for some $k < \infty$.

\begin{corollary}[Time-Varying, Unbounded-Horizon Reach-Avoid Certificate]
\label{corollary:reach-avoid infinite horizon}
    Consider the setting in Theorem~\ref{th:reach-avoid value function} with an unbounded-horizon reach-avoid property: $\hor = \infty$.
    Then, the real-valued function $\RA : \mathbb{R}^n \times \naturals^{\leq k+1}_{\geq 0} \to \mathbb{R}_{\geq 0}$ for some $k < \infty$ that is continuous in its first argument is a \emph{reach-avoid} certificate if, for some constants $\alpha_0, \alpha_i, \alpha_{k+1}, \beta_i,\gamma \in \reals_{\geq 0}$ with $i \in \naturals^{\leq k}_{\geq 1}$, it satisfies 
    \begin{subequations}
        \begin{align}
            \!\!\!\!\!&\RA(x, i) \geq 0 && \hspace{-2mm} \forall x \in \mathbb{R}^n, \ \forall i \in \naturals^{\leq k+1}_{\geq 0} \label{tv-bounded-1} \\
            &\RA(x, i) \leq 1 + \alpha_i && \hspace{-2mm} \forall x \in X_\reach, \ \forall i \in \naturals^{\leq k+1}_{\geq 0}  \label{tv-bounded-2} \\
            &\RA(x, i) \leq \alpha_i && \hspace{-2mm} \forall x \in X_\unsafe, \ \forall i \in \naturals^{\leq k+1}_{\geq 0} \label{tv-bounded-3}  \\ 
            &\RA(x, 0) \leq \alpha_0 && \hspace{-2mm}\forall x \in X_\safe\setminus X_\reach,  \label{tv-bounded-4} \\
            &\RA(x, i) \leq \mathbb{E}[\RA(\px', i-1)|x]  && \nonumber \\
            &  \quad - \alpha_{i-1} + \alpha_i + \beta_i 
                \!\!\! && \hspace{-2mm} \forall x \in X_\safe \! \setminus \! X_\reach,  \forall i \in \naturals^{\leq k}_{\geq 1} \label{tv-bounded-5} \\
            &\RA(x, k+1) \leq \mathbb{E}[\RA(\px', k)|x]  && \hspace{-2mm} \forall x \in X_\safe \! \setminus \! X_\reach \label{tv-bounded-6} \\
            &\RA(x, k+1) \geq \gamma && \hspace{-2mm} \forall x \in X_0. \label{tv-bounded-7} 
        \end{align}
    \end{subequations}
    Then, the  reach-avoid probability (for $H = \infty$) is lower-bounded by
    \begin{align}
        \label{eq: probability bound for TV infinite reach-avoid}
        \Pinf \geq \gamma - \left( \alpha_{k} + \sum_{i=1}^{ k} \beta_i \right ).
    \end{align}  
\end{corollary}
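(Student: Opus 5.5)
The plan is to reduce the corollary to the finite-horizon argument of Theorem~\ref{th:reach-avoid value function} at horizon $k+1$, and then pass to $\hor=\infty$ by monotonicity. First I note that the event in Definition~\ref{def: probabilistic reach-avoid} for horizon $k+1$ is contained in the event for $\hor=\infty$. Hence $\Pinf \geq \inf_{x\in X_0} V_{k+1}(x)$, where $V_j$ is the value function of the DP~\eqref{eq: value function}. It therefore suffices to show $V_{k+1}(x) \geq \gamma - \alpha_k - \sum_{i=1}^k \beta_i$ for every $x \in X_0$.

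Constraints \eqref{tv-bounded-1}--\eqref{tv-bounded-5} are exactly the hypotheses of Theorem~\ref{th:reach-avoid value function} for time indices $0,\dots,k$. So the same induction as in that proof gives, for all $x$,
\begin{align*}
V_k(x) \geq \RA(x,k) - \alpha_k - \sum_{i=1}^k \beta_i .
\end{align*}
On $X_r \cup X_u$ the sharper bound $V_k(x) \geq \RA(x,k)-\alpha_k$ also holds. I then do one extra DP step using \eqref{tv-bounded-6} rather than an $\alpha/\beta$-relaxed inequality. For $x \in X_0 \setminus X_r \subseteq X_s\setminus X_r$,
\begin{align*}
V_{k+1}(x) = \mathbb{E}[V_k(\px')\mid x] \geq \mathbb{E}[\RA(\px',k)\mid x] - \alpha_k - \sum_{i=1}^k\beta_i \geq \RA(x,k+1) - \alpha_k - \sum_{i=1}^k\beta_i .
\end{align*}
Combined with \eqref{tv-bounded-7}, this gives the claimed bound on $X_0\setminus X_r$. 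The slack $\alpha_{k+1}$ never enters this chain, which is why the bound carries $\alpha_k$ rather than $\alpha_{k+1}$.

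The remaining case, which I expect to be the only delicate point, is $x \in X_0 \cap X_r$. There $V_{k+1}(x)=1$, so I need $\gamma - \alpha_k - \sum_{i=1}^k\beta_i \leq 1$. If $X_0 \subseteq X_r$, then $\Pinf = 1$ and the bound is trivial: reaching at $k=0$ requires no avoidance condition, and no certified bound can exceed $1$ in a useful way. Otherwise, pick any $\bar x\in X_0\setminus X_r$. By \eqref{tv-bounded-7} and \eqref{tv-bounded-6}, $\gamma \leq \RA(\bar x,k+1) \leq \mathbb{E}[\RA(\px',k)\mid \bar x]$. I bound $\RA(\cdot,k)$ pointwise in three cases:
\begin{itemize}
\item on $X_r$, $\RA(\cdot,k) \leq 1+\alpha_k$ by \eqref{tv-bounded-2};
\item on $X_u$, $\RA(\cdot,k) \leq \alpha_k$ by \eqref{tv-bounded-3};
\item on $X_s\setminus X_r$, the induction bound above together with $V_k\leq 1$ gives $\RA(\cdot,k) \leq 1 + \alpha_k + \sum_{i=1}^k\beta_i$.
\end{itemize}
Hence $\gamma \leq 1+\alpha_k+\sum_{i=1}^k\beta_i$, which is exactly what is needed. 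Taking the infimum over $X_0$ then yields \eqref{eq: probability bound for TV infinite reach-avoid}.

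Alternatively, if one prefers to reuse Theorem~\ref{th:reach-avoid value function} verbatim, set $\hor=k+1$ and $\beta_{k+1} := \alpha_k-\alpha_{k+1}$. Then \eqref{tv-bounded-6} is exactly \eqref{eq:exp-tv} at $i=k+1$, and the resulting bound telescopes to the same expression. However, $\beta_{k+1}$ may be negative, and the theorem's final step silently uses the $X_s\setminus X_r$ branch on $X_0$. So I would present the direct argument above, which handles both issues explicitly.
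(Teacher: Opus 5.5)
Your route is essentially the paper's, made rigorous. The paper only says the proof ``follows in a similar fashion'' to Theorem~\ref{th:reach-avoid value function}, with \eqref{tv-bounded-6} acting as the step-$(k+1)$ expectation condition with zero slack, plus a ``steady-state'' appeal for $\hor=\infty$. What you add:
\begin{itemize}
\item Your containment of the horizon-$(k+1)$ event in the $\hor=\infty$ event is what actually justifies passing to infinite horizon. Constraint \eqref{tv-bounded-6} compares $\RA(\cdot,k+1)$ with $\RA(\cdot,k)$, not with itself, so no steady state is certified.
\item Your extra DP step correctly explains why $\alpha_k$ rather than $\alpha_{k+1}$ appears in the bound.
\item Your treatment of $X_0\cap X_\reach$ when $X_0\not\subseteq X_\reach$ is correct and covers a case the paper's sketch ignores. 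It uses a point $\bar x\in X_0\setminus X_\reach$ and the pointwise bound $\RA(\cdot,k)\le 1+\alpha_k+\sum_{i=1}^k\beta_i$.
\end{itemize}

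The one step that fails is the case $X_0\subseteq X_\reach$, which you call trivial. It is not trivial, and the claimed inequality can be false there. The reason is that $\alpha_{k+1}$ is decoupled from $\alpha_k$, and \eqref{tv-bounded-6} constrains $\RA(\cdot,k+1)$ only on $X_\safe\setminus X_\reach$. For a counterexample:
\begin{itemize}
\item take $\RA(\cdot,i)\equiv 0$ and $\alpha_i=\beta_i=0$ for $i\le k$;
\item let $X_0$ be a compact subset of the interior of $X_\reach$;
\item let $\RA(\cdot,k+1)$ be a continuous bump supported in the interior of $X_\reach$ and equal to $1+\alpha_{k+1}$ on $X_0$;
\item set $\gamma=1+\alpha_{k+1}$ with $\alpha_{k+1}>0$.
\end{itemize}
Every constraint holds; in particular \eqref{tv-bounded-6} and \eqref{tv-bounded-3} hold because the bump vanishes outside $X_\reach$. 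Yet the conclusion reads $\Pinf\ge 1+\alpha_{k+1}>1$.

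So no argument can close this case, and the statement itself needs repair. One option is the extra hypothesis $X_0\not\subseteq X_\reach$, under which your proof is complete. The other is a weaker conclusion such as $\Pinf\ge\gamma-\max\{\alpha_{k+1},\,\alpha_k+\sum_{i=1}^k\beta_i\}$. Your argument gives this in all cases, because $V_{k+1}(x)=1\ge\RA(x,k+1)-\alpha_{k+1}\ge\gamma-\alpha_{k+1}$ on $X_0\cap X_\reach$. You should flag this explicitly rather than dismiss it.
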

%

The proof follows in a similar fashion as in Theorem~\ref{th:reach-avoid value function}. In this formulation, the additional constraint \eqref{tv-bounded-6} ensures that the certificate reaches a steady-state form such that relaxation parameter $\beta_{k+1} = 0$, hence extending the guarantees to infinite horizon.

The time-varying certificates in Theorem~\ref{th:reach-avoid value function} and Corollary~\ref{corollary:reach-avoid infinite horizon} provide conservative (yet tighter) approximations of the reach-avoid probability (value function) at each time step. From an optimization standpoint, however, they may lead to a large problem, as a certificate with time-varying decision variables must be synthesized, i.e., number of decision variables increases with time steps. In many applications, it is preferable to work with a single certificate. To this end, we introduce a corresponding formulation for a time-invariant certificate for both finite- and infinite-horizon settings.


\begin{corollary}[Time-Invariant Reach-Avoid Certificate]
\label{col:reach-avoid time-invarying}
    Consider the setting in Theorem~\ref{th:reach-avoid value function}.
    Then, the continuous function $\RA : \mathbb{R}^n \to \mathbb{R}_{\geq 0}$ is a \emph{reach-avoid} certificate if, for constants $\alpha,\beta, \gamma \in \reals_{\geq 0}$, it satisfies 
    \begin{subequations}
        \begin{align}
            &\RA(x) \geq 0 && \forall x \in \mathbb{R}^n, \label{tiv-constraint1} \\
            &\RA(x) \leq 1 + \alpha && \forall x \in X_\reach,  \label{tiv-constraint2}\\
            &\RA(x) \leq \alpha && \forall x \in X_\unsafe,  \label{tiv-constraint3} \\
            &\RA(x) \leq \mathbb{E}[\RA(\px') \mid x] + \beta && \forall x \in X_\safe \setminus X_\reach, \label{tiv-constraint4-finite}\\
            &\RA(x) \geq \gamma && \forall x \in X_0.  \label{tiv-constraint5}
        \end{align}
    \end{subequations}
    Then, the  reach-avoid probability for a finite horizon $H \in \mathbb{N}_0$ is lower-bounded by
    \begin{align}
        \label{eq: probability bound for TI reach-avoid}
        \Pfin \geq \gamma - \alpha - \hor \beta.
    \end{align}  
    Furthermore, if $\RA$ satisfies Condition~\eqref{tiv-constraint4-finite} with $\beta = 0$,
        then the reach-avoid probability for infinite horizon $H = \infty$ is lower-bounded by
        \begin{align}
            \label{eq: probability bound for TI infinite reach-avoid}
            \Pinf \geq \gamma - \alpha.
        \end{align}
\end{corollary}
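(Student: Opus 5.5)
The finite-horizon bound follows from Theorem~\ref{th:reach-avoid value function} by building a time-varying certificate out of the time-invariant one. Set $\RA'(x,i) := \RA(x)$ for all $i \in \naturals^{\leq \hor}_{\geq 0}$, and choose the constants $\alpha_i := \alpha$ for every $i$, $\beta_i := \beta$ for every $i \geq 1$, and keep the same $\gamma$. We then check the conditions of \eqref{eq: certificate tv finite H} one at a time.

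\begin{itemize}
\item Conditions \eqref{eq:nonneg-tv}, \eqref{eq:reach-tv} and \eqref{eq:nsafe-tv} follow directly from \eqref{tiv-constraint1}--\eqref{tiv-constraint3}.
\item Condition \eqref{eq:Vinitial_ss} follows from \eqref{tiv-constraint5}.
\item In \eqref{eq:exp-tv} the term $-\alpha_{i-1}+\alpha_i$ vanishes because the $\alpha_i$ are constant. The condition therefore reduces to \eqref{tiv-constraint4-finite}, since $\mathbb{E}[\RA'(\px',i-1)\mid x] = \mathbb{E}[\RA(\px')\mid x]$.
\end{itemize}

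Condition \eqref{eq:initial-nsafe-tv}, which requires $\RA(x)\le\alpha$ on $X_\safe\setminus X_\reach$, is the main obstacle. It is not implied by the time-invariant conditions, so the reduction cannot simply be quoted. Instead I would rerun the induction from the proof of Theorem~\ref{th:reach-avoid value function} directly, using the DP \eqref{eq: value function}. The aim is to show
\[
V_k(x) \geq \RA(x) - \alpha - k\beta \quad \text{for all } x \text{ and all } k.
\]
The obvious base case, $V_0 = \mathbf{1}_{X_\reach}$ against $\RA - \alpha$, fails on $X_\safe\setminus X_\reach$ for the reason just given.

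The fix I would try is to shift the induction index. Here $\Pfin$ counts reaching $X_\reach$ at some $k \le \hor$, and on $X_\safe\setminus X_\reach$ the value is propagated by $\mathbb{E}[V_{k-1}]$. Unrolling \eqref{tiv-constraint4-finite} $\hor$ times along the trajectory stopped at the first exit from $X_\safe\setminus X_\reach$ gives
\[
\RA(x_0) \le \mathbb{E}[\RA(\px_{\tau\wedge \hor})] + \hor\beta,
\]
where $\tau$ is the first hitting time of $X_\reach\cup X_\unsafe$. This is a supermartingale-plus-drift argument.

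Now bound $\RA$ at the stopped time according to where the trajectory stops:
\begin{itemize}
\item on $\{\px_{\tau\wedge\hor}\in X_\reach\}$, $\RA \le 1+\alpha$;
\item on $X_\unsafe$, $\RA \le \alpha$;
\item on the event that the trajectory is still in $X_\safe\setminus X_\reach$ at time $\hor$, I need $\RA\le\alpha$ again.
\end{itemize}
This last case is exactly the missing piece. I would handle it by noting that the statement is meant to inherit this condition from the setting of Theorem~\ref{th:reach-avoid value function}. Failing that, I would add the terminal condition as an implicit hypothesis, or absorb the residual into $\alpha$ by interpreting $\alpha \ge \sup_{X_\safe\setminus X_\reach}\RA$ at the terminal step. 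With that in hand,
\[
\gamma \le \RA(x_0) \le \Pfin(1+\alpha) + \bigl(1-\Pfin\bigr)\alpha + \hor\beta,
\]
which gives \eqref{eq: probability bound for TI reach-avoid}.

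For the infinite-horizon claim with $\beta=0$, $\RA(\px_{k\wedge\tau})$ is a nonnegative supermartingale. The same decomposition gives
\[
\gamma \le (1+\alpha)\Pr[\text{reach before unsafe}] + \alpha \Pr[\tau \text{ hits } X_\unsafe] + \mathbb{E}\bigl[\RA(\px_k)\mathbf{1}_{\tau>k}\bigr].
\]
I would control the last term via the nonnegative supermartingale convergence theorem. On $\{\tau=\infty\}$ the limit of $\RA(\px_k)$ exists, and Fatou's lemma, applied by passing to the limit as in \cite{vzikelic2023learning}, bounds its contribution by $\alpha$ times the probability of never exiting. Letting $k\to\infty$ then gives \eqref{eq: probability bound for TI infinite reach-avoid}. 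Equivalently, one can apply the finite bound for every $\hor$ and take $\hor\to\infty$, since $P_\reach$ is monotone in $\hor$.

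The step I expect to be delicate is this treatment of trajectories that never leave $X_\safe\setminus X_\reach$.
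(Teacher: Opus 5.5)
You have found the right weak point, and the paper's own proof steps over it. The paper runs the DP induction $V_k(x)\ge\RA(x)-\alpha-k\beta$ and asserts the base case ``by the bounds on $\RA$''. But $V_0=0$ on $X_\safe\setminus X_\reach$, and none of the five hypotheses bounds $\RA$ there. Your stopping-time argument needs the same inequality at time $\hor$ on $\{\tau>\hor\}$.

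This hole cannot be closed, because the statement as written is false. Already for $\hor=0$ it fails for any dynamics. Suppose $X_\initial\not\subseteq X_\reach$, and take a continuous $\RA$ with values in $[0,1]$, equal to $1$ on $X_\initial$ and $0$ on $X_\unsafe$. It satisfies all five conditions with $\alpha=0$, $\beta=1$, $\gamma=1$, so the claim reads $P_\reach\ge 1$, whereas $P_\reach=\inf_{x_0\in X_\initial}\mathbf{1}_{X_\reach}(x_0)=0$.

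For every horizon, including $\hor=\infty$ with $\beta=0$, take $f(x,u,w)=x$, $X=\reals$, $X_\safe=[-2,2]$, $X_\reach=[1.5,2]$, $X_\initial=[-0.1,0.1]$. Let $\RA$ be a continuous bump equal to $1$ on $X_\initial$ and vanishing outside $[-1,1]$. The expectation condition is then an identity, so the corollary gives $P_\reach\ge 1$, while in fact $P_\reach=0$.

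Your proposed repairs do not yield a meaningful result.
\begin{itemize}
\item Condition~\eqref{eq:initial-nsafe-tv} is one of the certificate conditions of Theorem~\ref{th:reach-avoid value function}, not part of its ``setting'', so it cannot be inherited.
\item If you impose $\RA\le\alpha$ on $X_\safe\setminus X_\reach$ (or read $\alpha\ge\sup_{X_\safe\setminus X_\reach}\RA$), both the paper's induction and your stopping argument go through. But then any $x\in X_\initial\setminus X_\reach$ gives $\gamma\le\RA(x)\le\alpha$, so $\gamma-\alpha-\hor\beta\le 0$. The only exception is $X_\initial\subseteq X_\reach$, where $P_\reach=1$ trivially.
\end{itemize}
A single time-invariant function cannot be large on $X_\initial$ and small on $X_\safe\setminus X_\reach$. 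The time-varying theorem escapes this only because it asks for smallness at $i=0$ and largeness at $i=\hor$.

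Your infinite-horizon paragraph has a further slip. With $\beta=0$, condition~\eqref{tiv-constraint4-finite} says $\RA(x)\le\mathbb{E}[\RA(\nx)\mid x]$, so $\RA(\px_{k\wedge\tau})$ is a submartingale, not a supermartingale. The nonnegative supermartingale convergence theorem is therefore unavailable, and neither it nor Fatou produces the factor $\alpha$ on $\{\tau=\infty\}$. In the example above, $\tau=\infty$ surely.

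What genuinely makes your stopping argument work for $\hor=\infty$ is an exit guarantee. Suppose $\RA$ is bounded on $X_\safe$ and you require $\RA(x)+\epsilon\le\mathbb{E}[\RA(\nx)\mid x]$ on $X_\safe\setminus X_\reach$ for some $\epsilon>0$ (the analogue of the strict decrease in \cite{vzikelic2023learning}). Then $\mathbb{E}[\tau]<\infty$, and bounded convergence gives $\gamma\le\mathbb{E}[\RA(\px_\tau)]\le\alpha+\mathrm{Pr}^{x_0}[\px_\tau\in X_\reach]$, i.e.\ $P_\reach\ge\gamma-\alpha$.

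For finite $\hor$, the residual $\mathbb{E}[\RA(\px_\hor)\mathbf{1}_{\tau>\hor}]$ must be paid for explicitly, for instance by $\sup_{X_\safe\setminus X_\reach}\RA\cdot\mathrm{Pr}^{x_0}[\tau>\hor]$. With that term, the bound $\gamma-\alpha-\hor\beta$ as stated does not survive.
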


\begin{proof}
Similar to Theorem~\ref{th:reach-avoid value function}, the proof follows from DP. 
Recall $V_k(x)$ from~\eqref{eq: value function}.
We show by induction that $V_k(x) \ge \RA(x) - \alpha - k\beta$. For $k=0$, the inequality holds by the bounds on $\RA$. Assume it holds for some $k \ge 0$. 
Then, for $k+1$, Conditions~\eqref{tiv-constraint1}–\eqref{tiv-constraint3} hold trivially in a formal sense; therefore, we focus on the expectation, i.e., for $x \in X_\safe \setminus X_\reach$,
\begin{align*}
   V_{k+1}(x) & = \mathbb{E}[V_k(x') \mid x], \\
   & \ge \mathbb{E}[\RA(x') - \alpha - k\beta \mid x], \\
   & = \mathbb{E}[\RA(x') \mid x] -  \alpha - k\beta, \\
   & \ge \RA(x) - \alpha - (k+1) \beta,
\end{align*}
where the last step uses the inequality in \eqref{tiv-constraint4-finite}. 
Hence, for $H < \infty$, it holds that:
\begin{align*}
\Pfin & = \inf_{x \in X_0} V_H(x) \\ 
& \ge \inf_{x \in X_0} (\RA(x) - \alpha - \hor \beta) \\
& \ge \gamma - \alpha - \hor \beta.
\end{align*}
When $\beta=0$, the results trivially extends to $H = \infty$.
\end{proof}



\begin{remark}
    The super-martingale certificate in \cite{vzikelic2023learning} is a special case of our time-invariant certificate in Corollary~\ref{col:reach-avoid time-invarying}. Specifically, for the infinite-horizon case $H = \infty$ (with $\beta = 0$) and without relaxation (i.e., $\alpha = 0$), our certificate $\RA(x)$ recovers the certificate $V(x)$ in~\cite{vzikelic2023learning} through the relationship
    $\RA(x) = 1 - (1-\gamma)V(x)$
    with the same probably bound in~\eqref{eq: probability bound for TI infinite reach-avoid}.
\end{remark}




In what follows, we explain why relaxing the conditions by $\alpha$ allows for easier functional optimization, in particular for solvers such as SOS.

\begin{remark}
    \label{rem:alpha}
    Condition~ \ref{tiv-constraint1} requires the certificate to be non-negative globally. Further, condition \ref{tiv-constraint3}, without relaxation  requires $R(x) \leq 0$. The two conditions combined imply that $R(x) = 0$ must hold $\forall x \in X_u$. This condition is extremely hard to satisfy for continuous functions. Any infinitesimal (numerical) perturbation of the function will violate the constraints. Consequently, enforcing exact equality often renders the problem infeasible for parameterizations based on continuous polynomials. This problem is overcome by introducing variable $\alpha$. 
\end{remark}


With these sufficient conditions for reach-avoid certificates, our next goal is to construct an optimal $\RA$ that maximizes a lower bound on the probability. 
This is a functional optimization problem, where for time-varying, bounded-horizon certificates, the objective is
\begin{align}
    \label{eq: functional opt time-varying}
    \sup_{\RA} \; \gamma - \Big( \alpha_H + \sum_{i=1}^{H} \beta_i \Big) \quad \text{subject to \quad \eqref{eq:nonneg-tv}-\eqref{eq:Vinitial_ss}},
\end{align}
and for unbounded-horizon ($H = \infty$), time-varying certificates, it suffices to enforce $\beta_i + \alpha_i = 0$ beyond some finite step $k$, through Constraint~\eqref{tv-bounded-6}.  
For a finite-horizon, time-invariant certificate, this problem can be formulated as
\begin{align}
    \label{eq: functional opt time-invariant}
    \sup_{\RA} \;  \gamma - \alpha - H\beta  \quad \text{subject to \quad \eqref{tiv-constraint1}–\eqref{tiv-constraint5}}.
\end{align}
For the infinite-horizon case ($H = \infty$), it suffices to set $\beta = 0$ in both the objective function and Constraint~\eqref{tiv-constraint4-finite}.
Similarly, for controller synthesis, the supremum is taken over both $\RA$ and $\pi$.

These optimization problems are generally intractable due to their functional nature. In the next section, we show how they can be transformed into convex programs by restricting $\RA$ and $\pi$ to polynomial functions and employing SOS relaxations.

\section{Sum-of-Squares Functional Optimization}



Our approach to solving the reach–avoid verification and controller synthesis problems is based on SOS programming. Functional optimization via SOS enables the direct search for certificate functions within a polynomial function space. Accordingly, in this section we restrict $\RA$ to the class of polynomial functions.

\subsection{Certificate Synthesis via SOS Polynomials}


We first focus on certificate synthesis for a given controller~$\pi$ (Problem~\ref{Prob:ReachAvoidCertification}).  
Then, we formulate the time-varying certificate (Theorem~\ref{th:reach-avoid value function}) optimization problem in~\eqref{eq: functional opt time-varying} as an SOS program.
All time and state constraints are incorporated simultaneously, and the search for a valid certificate is a single convex optimization problem.

\begin{theorem}[Time-Varying, Bounded-Horizon Reach-Avoid SOS Certificate]
    \label{th:one-shot-certificate}
    Consider a sequence of SOS polynomial functions $\{\RA(x, i)\}_{i=0}^H$, and the semi-algebraic sets {$X_{\bar \safe} = X_{\safe} \setminus X_{\reach} = \{ x \in \reals^n \mid {h_{\bar s}(x)}  \ge 0\}$},
    $X_\initial = \{ x \in \reals^n \mid h_\initial(x) \ge 0\}$, $X_\unsafe = \reals^n \setminus X_\safe = \{ x \in \reals^n \mid h_\unsafe(x) \ge 0\}$, and $X_\reach = \{ x \in \reals^n \mid h_\reach(x) \ge 0\}$, where $h_j$ with $j\in \{\bar{s}, 0, \unsafe, \reach\}$ is a vector of polynomials. Let $\mathcal{L}_{\bar \safe}^{i}(x)$, $\mathcal{L}_\initial(x)$, $\mathcal{L}^{i}_\unsafe(x)$, and $\mathcal{L}^{i}_\reach(x)$ be vectors of SOS polynomials with the same dimensions as {$h_{\bar s}$}, $h_\initial$, $h_\unsafe$, and $h_\reach$, respectively, for each $i \in \naturals^{\leq \hor}_{\geq 0}$. 
    Also, let $\nx = f(x, \pi, \textbf{w})$ for a given 
    polynomial
    controller $\pi$. 
    Then, a \emph{time-varying reach-avoid certificate} can be obtained by solving the SOS optimization problem:
    \begin{subequations}
        \label{pr:sos-1}
        \begin{align}
        & \!\!\! \max_{\substack{\gamma, (\alpha_i)_{i=0}^H, (\beta_i)_{i=1}^H}} \!\!        \gamma - \left( \alpha_H + \sum_{i=1}^{ \hor} \beta_i
        \right )  
         \text{subject to:} \nonumber \\
        &\RA(x, i)\in \Lambda && \hspace{-10mm} \forall i \in \naturals^{\leq \hor}_{\geq 0} \label{constraint1} \\
        - & \RA(x, i) + 1 + \alpha_i -  \mathcal{L}^{i}_{\reach}(x)^{\top}h_{\reach}(x) \in \Lambda && \hspace{-10mm} \forall i \in \naturals^{\leq \hor}_{\geq 0}  \label{constraint2} \\
        - & \RA(x, i) + \alpha_i - \mathcal{L}^{i}_{\unsafe}(x)^{\top}h_{\unsafe}(x) \in \Lambda && \hspace{-10mm} \forall i \in \naturals^{\leq \hor}_{\geq 0}  \label{constraint3} \\
        - & \RA(x, 0) + \alpha_0 - \mathcal{L}^{0}_{\bar \safe}(x)^{\top} 
        {h_{\bar s}(x)} \in \Lambda  \label{constraint4}  \\
        - & \RA(x, i) + \mathbb{E}[\RA(\px', i-1) \mid x ] - \! \alpha_{i-1} + \nonumber \\
        & \hspace{13mm} \alpha_i +\beta_i -  \mathcal{L}^{i}_{\bar \safe}(x)^{\top}{h_{\bar s}(x)} \in \Lambda && \hspace{-10mm}\forall i \in \naturals^{\leq \hor}_{\geq 1}  \label{constraint5} \\
       &  \RA(x, \hor) - \gamma + \mathcal{L}_{\initial}(x)^{\top}h_{\initial}(x) \in \Lambda && \label{constraint6}  
        \end{align}
    \end{subequations}
    which guarantees the probabilistic reach-avoid bound in~\eqref{eq: probability bound for TV reach-avoid}.
\end{theorem}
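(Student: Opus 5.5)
The plan is to reduce the statement to Theorem~\ref{th:reach-avoid value function}. We show that any feasible point of the SOS program \eqref{pr:sos-1} yields a function $\RA$, together with constants $\gamma,(\alpha_i),(\beta_i)$, satisfying conditions \eqref{eq:nonneg-tv}--\eqref{eq:Vinitial_ss}. The bound \eqref{eq: probability bound for TV reach-avoid} then follows directly. Since the objective of \eqref{pr:sos-1} equals the right-hand side of that bound, the maximizer certifies the largest such bound within the chosen polynomial class.

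We verify the constraints one at a time.

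\begin{enumerate}
\item Constraint \eqref{constraint1} states that each $\RA(\cdot,i)$ is SOS. By Definition~\ref{def:sosp1} it is therefore globally non-negative, which gives \eqref{eq:nonneg-tv}. It is also polynomial, hence continuous in $x$, as Theorem~\ref{th:reach-avoid value function} requires.

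\item For \eqref{constraint2}, write $\gamma_i(x) := -\RA(x,i)+1+\alpha_i$. The constraint says $\gamma_i - (\mathcal{L}^i_\reach)^\top h_\reach = \lambda_0$ for some $\lambda_0\in\Lambda$. So $\gamma_i = \lambda_0 + \sum_j \lambda_j h_{\reach,j}$, and on $X_\reach$ every $h_{\reach,j}\ge 0$ while every $\lambda_j\ge0$. Hence $\gamma_i\ge0$ on $X_\reach$, which is \eqref{eq:reach-tv}. This is the Positivstellensatz-type certificate of Proposition~\ref{prop:sosp1}, used in the direction of Corollary~\ref{cor:sosp1}.

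\item The same argument applied to \eqref{constraint3} on $X_\unsafe$ gives \eqref{eq:nsafe-tv}. Applied to \eqref{constraint4} on $X_{\bar\safe}=X_\safe\setminus X_\reach$, it gives \eqref{eq:initial-nsafe-tv}.

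\item For \eqref{constraint6}, the sign of the multiplier term is flipped. We obtain $\RA(x,\hor)-\gamma = \lambda_0 - (\mathcal{L}_\initial)^\top h_\initial$. This is not immediately of the right sign, so this step needs care (see below).
\end{enumerate}

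The step we expect to be the main obstacle is the expectation constraint \eqref{constraint5}. Its left-hand side contains $\mathbb{E}[\RA(\px',i-1)\mid x]$, and we must argue that this is itself a polynomial in $x$ whose coefficients are affine in the decision variables. We plan to do this as follows. Since $f$ and $\pi$ are polynomial, $\RA(f(x,\pi(x),w),i-1)$ is a polynomial in $(x,w)$. Taking expectation over $\pw$ replaces each monomial $w^\kappa$ by the moment $\mathbb{E}[\pw^\kappa]$. These moments are finite and available in closed form by the moment-generating-function assumption on $p_\pw$. The result is a polynomial in $x$, linear in the coefficients of $\RA(\cdot,i-1)$. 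With this in hand, the same multiplier argument shows that the expression $-\RA(x,i)+\mathbb{E}[\RA(\px',i-1)\mid x]-\alpha_{i-1}+\alpha_i+\beta_i$ is non-negative on $X_{\bar\safe}$. This yields \eqref{eq:exp-tv}.

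Convexity of the program is then immediate. Every constraint is affine in the unknown coefficients of $\RA$, the multipliers $\mathcal{L}$, and the scalars $\gamma,\alpha_i,\beta_i$, and SOS membership is an LMI.

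Returning to \eqref{constraint6}: with $\lambda_j\ge0$ and $h_{\initial,j}\ge0$ on $X_\initial$, the expression $\lambda_0 - (\mathcal{L}_\initial)^\top h_\initial$ is not obviously non-negative there. The intended reading of \eqref{constraint6} is that it encodes $\RA(x,\hor)-\gamma\ge0$ on $X_\initial$. Under the standard SOS convention, this is certified by $\RA(x,\hor)-\gamma-(\mathcal{L}_\initial)^\top h_\initial\in\Lambda$. We will therefore interpret the multiplier term with the sign required by Proposition~\ref{prop:sosp1}. Equivalently, we treat the sign as absorbed into how $h_\initial$ or $\mathcal{L}_\initial$ is defined, and state explicitly that the constraint is to be read this way. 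Under that reading, \eqref{eq:Vinitial_ss} holds.

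With all six conditions established, Theorem~\ref{th:reach-avoid value function} applies and gives the bound \eqref{eq: probability bound for TV reach-avoid}.
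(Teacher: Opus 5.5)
Your reduction is the paper's own: check that a feasible point of \eqref{pr:sos-1} satisfies \eqref{eq:nonneg-tv}--\eqref{eq:Vinitial_ss}, then invoke Theorem~\ref{th:reach-avoid value function}. Your moment computation for $\mathbb{E}[\RA(\px',i-1)\mid x]$ spells out what the paper asserts in one line. Your direct argument $\gamma_i=\lambda_0+\sum_j\lambda_j h_{\reach,j}\ge0$ on $X_\reach$ is the direction actually needed; Corollary~\ref{cor:sosp1}, as printed, states the converse, so do not lean on it.

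Your objection to \eqref{constraint6} is correct, and the paper's proof is wrong here, since it merely asserts that \eqref{constraint6} gives \eqref{eq:Vinitial_ss}. As written, the constraint only yields $\RA(x,\hor)-\gamma\ge-\mathcal{L}_\initial(x)^\top h_\initial(x)$, which is nonpositive on $X_\initial$. This is not a convention that can be absorbed into $\mathcal{L}_\initial$, which must remain SOS, or into $h_\initial$, which is fixed by $X_\initial=\{h_\initial\ge0\}$. State plainly that you prove the statement with \eqref{constraint6} corrected to $\RA(x,\hor)-\gamma-\mathcal{L}_\initial(x)^\top h_\initial(x)\in\Lambda$.

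What you genuinely miss is the remaining hypothesis of Theorem~\ref{th:reach-avoid value function}: its constants must lie in $\reals_{\ge0}$. Program \eqref{pr:sos-1} does not impose this, and its objective actively rewards negative $\beta_i$.
\begin{itemize}
\item For $\alpha_i$, non-negativity follows from \eqref{constraint1} and \eqref{constraint3} whenever $X_\unsafe\neq\emptyset$.
\item The sign of $\gamma$ plays no role in the bound.
\item $\beta_i\ge0$ is really used. The induction needs $V_k\ge\RA(\cdot,k)-\alpha_k-\sum_{i\le k}\beta_i$ wherever $\px'$ may land, including $X_\reach\cup X_\unsafe$, where only $V_k\ge\RA(\cdot,k)-\alpha_k$ is available.
\end{itemize}

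Without $\beta_i\ge0$ the bound can fail. Take $f(x,u,w)=w$ with $\mathrm{Pr}(\pw\in X_\reach)=p>0$ and $\mathrm{Pr}(\pw\in X_\unsafe)=q$, and set $\hor=2$. Choose all $\alpha_i=0$, $\beta_1=-p$, $\beta_2=0$, $\RA(\cdot,0)=\RA(\cdot,1)=\mathbf{1}_{X_\reach}$, $\RA(\cdot,2)=\mathbf{1}_{X_\reach}+p\,\mathbf{1}_{X_{\bar\safe}}$, and $\gamma=p$. Conditions \eqref{eq:nonneg-tv}--\eqref{eq:Vinitial_ss} all hold; smoothing the indicators into nearby continuous bumps changes the numbers only slightly. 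Yet \eqref{eq: probability bound for TV reach-avoid} gives $2p$, while the true value on $X_\initial\subseteq X_{\bar\safe}$ is $2p-p^2-pq$.

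So you must add $\beta_i\ge0$ to \eqref{pr:sos-1}, or explicitly let the scalars range over $\reals_{\ge0}$ as in Theorem~\ref{th:reach-avoid value function}. The paper's proof has the same omission.
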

\begin{proof}
It suffices to show that if $\RA$ satisfies constraints~\eqref{constraint1}-\eqref{constraint6}, then conditions~\eqref{eq:nonneg-tv}-\eqref{eq:Vinitial_ss} are satisfied. Since constraint~\eqref{constraint1} guarantees that each $\RA(x, i)$ is a SOS polynomial, the nonnegative condition~\eqref{eq:nonneg-tv} holds. Similarly, by Corollary~\ref{cor:sosp1}, if constraints \eqref{constraint2}-\eqref{constraint4} hold, then $\RA(x, i)$ is respectively smaller that $(1 + \alpha_i)$ in $X_r$, and $\RA(x, i)$ is smaller than $\alpha_i$ in $X_u$, for each $i = \left \{0, \dots, H \right \}$. For $i=0$, if~\eqref{constraint4} holds, then, by the same corollary condition~\eqref{eq:initial-nsafe-tv} is satisfied. Further, constraint\eqref{constraint5} poses the expectation condition as defined by Condition~\eqref{eq:exp-tv}. Since $\mathbf{x'}$ is composed of polynomials, the composition $\mathbb{E}[\RA(\mathbf{x'}, i-1) \mid x]$ yields a SOS constraint. Finally, constraint~\eqref{constraint6} satisfies the final horizon condition~\eqref{eq:Vinitial_ss}.
\end{proof}

\begin{remark}
    The formulation in Theorem~\ref{th:one-shot-certificate} enforces $\RA(x,i)$ to be an SOS polynomials.  This can be easily relaxed to the set of non-negative polynomials by replacing Constraint~\eqref{constraint1} with 
    $\RA(x, i) -  \mathcal{L}^{i}_{X}(x)^{\top}h_{X}(x) \in \Lambda$, where $\mathcal{L}^{i}_{X}(x)$ are vectors of SOS polynomials (Lagrange multipliers) and $h_{X}(x)$ is a vector of polynomials defining the semi-algebraic set $X = X_s \cup X_r \cup X_u$.
\end{remark}

For the synthesis of the time-invariant certificate in Corollary~\ref{col:reach-avoid time-invarying},
we obtain a much simpler SOS formulation below.


\begin{corollary}[Time-Invariant, Bounded Horizon Reach-Avoid SOS Certificate]
    \label{col:one-shot-certificate-ti}
    Consider the setting in Theorem~\ref{th:one-shot-certificate} with a time-invariant SOS polynomial function $\RA$.
    Then, an finite-horizon \emph{reach-avoid certificate} can be obtained by solving the following SOS optimization problem 
    \begin{subequations}
        \label{pr:sos-2}
        \begin{align}
        & \hspace{5mm} \max_{\substack{\gamma, \alpha, \beta}}
        \quad \gamma - (\alpha + H\beta)
        \quad 
         \text{subject to:} \nonumber \\
        &\RA(x)\in \Lambda &&  \\
        - & \RA(x) + 1 + \alpha -  \mathcal{L}_{\reach}(x)^{\top}h_{\reach}(x) \in \Lambda  \\
        - & \RA(x) + \alpha - \mathcal{L}_{\unsafe}(x)^{\top}h_{\unsafe}(x) \in \Lambda   \\
        - & \RA(x) + \mathbb{E}[\RA(\px') \mid x ] +\beta  -  \mathcal{L}_{\safe}(x)^{\top}{h_{\bar s}}(x) \in \Lambda  \\
       &  \RA(x) - \gamma + \mathcal{L}_{\initial}(x)^{\top}h_{\initial}(x) \in \Lambda && 
        \end{align}
    \end{subequations}
    which guarantees the probabilistic reach-avoid bound in~\eqref{eq: probability bound for TI reach-avoid}.
\end{corollary}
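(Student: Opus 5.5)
The plan is to reduce the statement to Corollary~\ref{col:reach-avoid time-invarying}, exactly as the proof of Theorem~\ref{th:one-shot-certificate} reduces to Theorem~\ref{th:reach-avoid value function}. Suppose a feasible point $(\RA,\gamma,\alpha,\beta)$ of \eqref{pr:sos-2} is given, together with the SOS multiplier vectors. It then suffices to check that $\RA$ satisfies conditions \eqref{tiv-constraint1}--\eqref{tiv-constraint5} with the same constants. Once that holds, Corollary~\ref{col:reach-avoid time-invarying} gives $\Pfin \geq \gamma - \alpha - \hor\beta$, which is \eqref{eq: probability bound for TI reach-avoid}. Maximizing the objective $\gamma-(\alpha+H\beta)$ only selects the tightest such bound among feasible points, so soundness does not depend on optimality.

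The constraints are checked one at a time.
\begin{itemize}
\item $\RA\in\Lambda$ gives global nonnegativity of $\RA$, which is \eqref{tiv-constraint1}.
\item For the reach constraint, set $\gamma_\reach(x) := -\RA(x)+1+\alpha$. Membership $\gamma_\reach - \L_\reach^\top h_\reach \in\Lambda$ means $\gamma_\reach(x) = \lambda_0(x) + \L_\reach(x)^\top h_\reach(x)$ for some SOS $\lambda_0$. On $X_\reach$ every component of $h_\reach$ is nonnegative and every multiplier is SOS, so $\gamma_\reach\geq 0$ there. This is the certificate form of Proposition~\ref{prop:sosp1}/Corollary~\ref{cor:sosp1}, and it yields \eqref{tiv-constraint2}.
\item The same argument with $h_\unsafe$ gives $\RA\le\alpha$ on $X_\unsafe$, which is \eqref{tiv-constraint3}.
\item The same argument with $h_0$ applied to $\RA-\gamma$ gives \eqref{tiv-constraint5}. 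Here the multiplier enters with the opposite sign from the others. I would either read it as a sign convention in which $\L_0$ absorbs the sign, or note that the intended constraint is $\RA-\gamma-\L_0^\top h_0\in\Lambda$.
\end{itemize}

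The expectation constraint needs the most care. First I would note that, because $f$ and $\RA$ are polynomials, $\RA(f(x,\pi(x),\pw))$ is a polynomial in $(x,\pw)$ once $\pi$ is polynomial. Taking the expectation over $\pw$ replaces each monomial in $\pw$ by the corresponding moment of $p_\pw$; these moments are finite and available in closed form by the moment-generating-function assumption. Hence $\mathbb{E}[\RA(\px')\mid x]$ is a polynomial in $x$ whose coefficients are linear in the coefficients of $\RA$, so the constraint is a legitimate (linear-in-decision-variables) SOS constraint. Applying the multiplier argument above on $X_{\bar\safe}=X_\safe\setminus X_\reach$ then gives $\RA(x)\le\mathbb{E}[\RA(\px')\mid x]+\beta$ there, which is \eqref{tiv-constraint4-finite}.

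I expect the only real obstacle to be the expectation step, specifically justifying that the expected composition is an honest polynomial in $x$ so that the SOS machinery applies. This needs finite moments of all required orders, a polynomial $\pi$, and the interchange of expectation with the finite polynomial sum. The latter is immediate, since the sum is finite and each term is integrable. A secondary point is that Proposition~\ref{prop:sosp1} is used only in its easy (sufficiency) direction, so compactness of the sets is not actually needed.
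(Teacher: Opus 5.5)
Your reduction is the paper's own (its ``trivial adaptation'' of Theorem~\ref{th:one-shot-certificate}), and the SOS-to-pointwise implications you check are fine. You are also right that the $X_0$ constraint must read $\RA-\gamma-\L_0^\top h_0\in\Lambda$. Only that reading works, because an SOS vector cannot absorb a minus sign. The step that fails is the final appeal to Corollary~\ref{col:reach-avoid time-invarying}. That corollary is false as stated, so verifying its hypotheses does not yield \eqref{eq: probability bound for TI reach-avoid}. Its DP induction needs, at $k=0$, $V_0(x)=0\ge\RA(x)-\alpha$ on $X_\safe\setminus X_\reach$, i.e.\ $\RA\le\alpha$ there, and nothing in \eqref{tiv-constraint1}--\eqref{tiv-constraint5} imposes this. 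The time-varying Theorem~\ref{th:reach-avoid value function} contains \eqref{eq:initial-nsafe-tv} for exactly this purpose. The ``trivial'' extension to $H=\infty$ inherits the same missing base case.

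Here is a concrete failure that is also feasible for \eqref{pr:sos-2} with your corrected $X_0$ constraint. Take $n=1$, $f(x,u,w)=x$, $X=[0,11]$, $X_\safe=[0,10]$, $X_\reach=[0,1]$, $X_0=\{5\}$, and $\RA(x)=c\,x^2(10-x)^2\in\Lambda$.
\begin{itemize}
\item The expectation constraint holds with $\beta=0$ and zero multiplier.
\item $\RA\le 121c$ on $X_\unsafe=(10,11]$ and $\RA\le 81c\le 1+121c$ on $X_\reach$.
\item $\RA(5)=625c$.
\end{itemize}
These univariate interval inequalities all admit the required multiplier certificates; for example, $\RA-625c+50c(x-5)^2=c(x-5)^4$. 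The claimed bound is $\gamma-\alpha-H\beta=504c$, which exceeds $1$ for $c>1/504$, so the program is in fact unbounded. The true reach-avoid probability is $0$ for every horizon, finite or infinite, because the state never moves. So the statement cannot be proved as posed, by your argument or the paper's.

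Adding $-\RA+\alpha-\L_{\bar{\safe}}^\top h_{\bar s}\in\Lambda$ restores the induction, but it forces $\gamma\le\alpha$ whenever $X_0\not\subseteq X_\reach$, which makes the finite-horizon time-invariant bound vacuous. For $H=\infty$ with $\beta=0$, a correct route is optional stopping for the bounded submartingale $\RA(\px_{k\wedge\tau})$, with $\tau$ the exit time from $X_\safe\setminus X_\reach$. This needs the extra hypothesis $\tau<\infty$ a.s., for example bounded $X_\safe$ with additive full-support noise, or a strict expected-increase margin analogous to the $\epsilon$-decrease in \cite{vzikelic2023learning}.
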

The proof of this corollary is a trivial adaptation of Theorem~\ref{th:one-shot-certificate}.
Also, we can easily extend the SOS-optimization formulations in Theorem~\ref{th:one-shot-certificate} and Corollary~\ref{col:one-shot-certificate-ti} to the infinite horizon setting by setting $\beta_H$ and $\beta$ to $0$, respectively.  


As discussed in Section~\ref{sec: certificate formulations}, the time-varying and time-invariant certificates exhibit different computational properties. Roughly speaking, the time-varying formulation yields tighter lower bounds but incurs higher computational cost. To make this trade-off precise, we analyze the computational complexity of each problem below.

\subsubsection{Computational Complexity}
Both the time-varying (Theorem~\ref{th:one-shot-certificate}) and time-invariant (Corollary~\ref{col:one-shot-certificate-ti}) formulations use SOS polynomials of degree $d$ in $n$ variables, with $b(d,n)=\binom{n+d}{d}$ monomials. Each SOS constraint yields an SDP block of size $\mathcal{O}(b(d,n))$, with $\mathcal{O}(b(d,n)^2)$ variables and computational cost $\mathcal{O}(b(d,n)^3)$.  We can analyze the runtime complexity of each certificate formulation, as detailed below.

\paragraph{Time-varying certificate in Theorem~\ref{th:one-shot-certificate}}
The finite-horizon formulation introduces $(H{+}1)$ polynomials $\{\RA(x,i)\}_{i=0}^H$ and corresponding multipliers.
This yields $\approx 10 \times (H{+}1)$ SOS constraints in total,
resulting in an SDP with (roughly) $(H{+}1)$ blocks of size $b(d,n)$. The overall worst-case complexity is 
$$\mathcal{O}\!\big((H{+}1)^3 b(d,n)^3\big).$$

\paragraph{Time-invariant certificate in Corollary~\ref{col:one-shot-certificate-ti}}
The single-certificate formulation uses one polynomial $\RA(x)$ and one set of multipliers, yielding an SDP of size $b(d,n)$ independent of the horizon, with a worst-case complexity 
$$\mathcal{O}\!\big(b(d,n)^3\big).$$

\paragraph{Comparison}
%
The time-varying formulation is more computationally expensive by a factor of $(H{+}1)^3$, 
since it essentially requires synthesizing a different certificate at every time step. The time-invariant approach avoids this cost by reusing a single certificate across the horizon, at the expense of increased conservatism. 
Determining which method offers the best trade-off between tightness of the probability bound and computational cost is problem dependent.  

Furthermore, recall from Remark~\ref{rem:alpha} that the constraints effectively require equality conditions to be satisfied, which are difficult to enforce using continuous parameterizations. As such, relaxations $\alpha_i$ are introduced to ensure feasibility of the optimization problem, which introduces an approximation error due to representing the reach–avoid value function using continuous certificates. 
To reduce the effect of this conservatism, one option is to use higher-degree polynomials for $\RA$ in the time-invariant setting, which can decrease conservatism while remaining tractable. 
The time-varying formulation can achieve even tighter approximations, but with higher computational cost. 
In practice, as observed in our benchmarks in Section~\ref{sec:experiments},
using lower-degree polynomials in the time-varying formulation often provides a favorable balance, offering improved approximation accuracy while keeping the optimization manageable.  

In other words, a single time-invariant certificate often requires a highly expressive (high-degree) polynomial template, whereas time-varying certificates can achieve similar or even tighter probability bounds using multiple low-degree polynomials.  Finally, we note that a promising direction for future work is to investigate decomposition techniques that reduce the time-varying SOS optimization problem into a sequence of smaller, more tractable problems.  This is possible because the time aspect of the certificate can essentially be viewed as a separate optimization problem.



\subsection{Joint synthesis of Controller and Certificate}
In this section, we focus on Problem~\ref{Prob:ReachAvoidSynthesis} and present a novel SOS-optimization problem for the joint synthesis of the certificate $\RA$ and the feedback controller $\pi$.  As such, in what follows, we assume $\pi$ is a polynomial function of $x$.

We first establish two central lemmas that our approach is built on.
The first lemma, which is an adaptation of \cite{bach2025sum}, establishes how a \emph{min-max} over polynomials can be relaxed to an SOS program.  For simplicity of presentation, we focus on time-invariant $\RA$ and $\pi$ in an infinite-horizon setting.

\begin{lemma}[SOS Relaxation for Min-Max Expectation]
\label{lemma:dual}
Let real-valued function $\RA : \mathbb{R}^n \to \mathbb{R}_{\geq 0}$ and $\nx = f(x, \pi(x), \textbf{w})$, such that the expectation condition in~\eqref{tiv-constraint4-finite}, for $\beta = 0$ (infinite-horizon) is
\[
g(\pi(x),x)\; := \;\mathbb{E}[\RA(\px')\mid x,\pi] -\RA(x),
\]
where $g$ is a polynomial in the controller parameters $\pi(x)$ and the state $x$.  
Further, recall semi-algebraic set 
${X_{\bar \safe}} = X_\safe\setminus X_\reach = \{\, x\in\mathbb{R}^n \mid {h_{\bar s}}(x)\ge 0 \,\},$
where ${h_{\bar s}}$ is a vector of polynomials. 
Then, the synthesis constraint
\begin{align}
    \label{eq: min-max problem}
    \min_{\pi}\;\max_{x\in {X_{\bar \safe}}}\; g(\pi(x),x) \;\ge\; 0
\end{align}
is a polynomial min--max problem on set ${X_{\bar \safe}}$. 
The admissible controller parameters are restricted to a compact semi-algebraic set $ U := \{ u \in \mathbb{R}^m \mid h_U(u) \ge 0 \},$ where $h_U(u)$ is a vector of polynomials. The admissible controller then satisfies
$\pi(x) \in U.$
The synthesis constraint admits a level-$d$ moment/SOS relaxation~\cite{bach2025sum}. Express $g(\pi,x)$   as
\[
g(\pi(x),x)=\sum_{\eta} g_{\eta}(\pi(x))\,x^{\eta}.
\]
Then, the primal truncated moment relaxation is
\begin{subequations}
    \label{eq:primal-moment}
    \begin{align}
        J_{p} = \sup_{y,\sigma}\ & \sigma \\
        \text{s.t.}\quad
        & y_0 = 1, \\
        & M_d(y) \succeq 0,  \\
        & M_{d-d_h}\big({h_{\bar s}}  (y)\big) \succeq 0, \\
        & \sum_\eta g_\eta(\pi(x)) y_\eta \ge \sigma, \quad \forall \pi,
    \end{align}
\end{subequations}
where $M_d(y)$ is the moment matrix, and  $M_{d-d_h}\big({h_{\bar s}}  (y))$ denotes the localizing matrices associated with the polynomial constraints defining {$X_{\bar \safe}$} in the Lasserre hierarchy.
Then, dualizing the infinite constraint over $\pi$ with a Putinar-type SOS representation yields the following dual SOS program \cite{bach2025sum}:
\begin{subequations}
    \label{eq:dual-sos}
    \begin{align}
        J_{d} = \inf_{\lambda, s_0}\ & \lambda \\
        \text{s.t.}\quad &
        \lambda - g(\pi(x)) = s_0(\pi(x)), \\
        & s_0 \in \Lambda,
    \end{align}
\end{subequations}
where $\Lambda$ denotes the cone of SOS polynomials in $x$ up to a chosen degree $d$. If the relaxations \eqref{eq:primal-moment}--\eqref{eq:dual-sos} admits a feasible solution with value $\sigma^\star\ge 0$, then
\[
\min_{\pi\in U}\max_{x\in {X_{\bar \safe}} } g(\pi(x),x)\ge \sigma^\star\ge 0.
\]
\end{lemma}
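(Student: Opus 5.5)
The plan is to prove the final claim by weak duality, chaining together three inequalities. Throughout, $\pi$ is held in the admissible class with $\pi(x)\in U$, and the quantity to be bounded is $\Phi(\pi) := \max_{x\in X_{\bar\safe}} g(\pi(x),x)$.

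\textbf{Step 1: $\Phi(\pi)$ is a moment problem.} Fix $\pi$. Then
\[
\Phi(\pi) = \sup_{\mu}\int g(\pi(x),x)\,d\mu(x),
\]
where the supremum ranges over probability measures $\mu$ supported on $X_{\bar\safe}$. A maximizer exists because $X_{\bar\safe}$ is compact and $g$ is continuous, so Dirac measures attain the maximum. Expanding $g=\sum_\eta g_\eta(\pi(x))x^\eta$ and setting $y_\eta=\int x^\eta\,d\mu$ turns the objective into $\sum_\eta g_\eta(\pi(x))y_\eta$. Every such moment sequence $y$ satisfies $y_0=1$, $M_d(y)\succeq 0$, and $M_{d-d_h}(h_{\bar s}(y))\succeq0$, since these are just nonnegativity of $\int p^2\,d\mu$ and $\int p^2 h_{\bar s}\,d\mu$. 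Hence the truncated feasible set of \eqref{eq:primal-moment} contains all true moment vectors.

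\textbf{Step 2: the primal relaxation certifies a bound uniform in $\pi$.} Take a feasible pair $(y,\sigma^\star)$ of \eqref{eq:primal-moment}. Its last constraint gives $\sum_\eta g_\eta(\pi)y_\eta\ge\sigma^\star$ for every admissible $\pi$. We want $\Phi(\pi)\ge\sigma^\star$. The natural route is to take $y$ as the moments of a genuine measure on $X_{\bar\safe}$, or to invoke flatness/finite convergence of the Lasserre hierarchy on the compact set so that $y$ admits a representing measure $\mu$. Then
\[
\Phi(\pi)\ge\int g(\pi(x),x)\,d\mu=\sum_\eta g_\eta(\pi)y_\eta\ge\sigma^\star .
\]
Taking the infimum over $\pi\in U$ gives $\min_{\pi}\Phi(\pi)\ge\sigma^\star$.

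\textbf{Step 3: tie in the dual.} The SOS program \eqref{eq:dual-sos} is the Lagrangian dual of \eqref{eq:primal-moment}. Pairing the SOS identity $\lambda-g=s_0$ with any feasible $y$ yields $\lambda-\sum_\eta g_\eta y_\eta=L_y(s_0)\ge0$, which is weak duality $J_p\le J_d$. This makes the reported value $\sigma^\star$ consistent across the two programs and shows that a numerically found dual certificate brackets the primal one. Finally, if $\sigma^\star\ge0$, the chain $\min_\pi\max_x g\ge\sigma^\star\ge0$ is exactly the synthesis constraint \eqref{eq: min-max problem}, which by Corollary~\ref{col:reach-avoid time-invarying} yields the expectation condition needed for the infinite-horizon bound.

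\textbf{Main obstacle.} The difficult step is Step 2. A truncated pseudo-moment vector $y$ need not come from a measure, and then $\sum_\eta g_\eta y_\eta$ can overestimate $\max_x g$, which breaks the lower bound. I would first handle this by restricting to feasible $y$ that satisfy a flat-extension rank condition, which guarantees an atomic representing measure supported on $X_{\bar\safe}$ (Curto--Fialkow). Failing that, I would appeal to the convergence result of \cite{bach2025sum} under an Archimedean assumption on $h_{\bar s}$ and state the lemma for the level $d$ at which exactness holds. A secondary issue is that the constraint is quantified over infinitely many $\pi$; I would reduce this by noting it is polynomial in the finitely many coefficients of $\pi$ ranging over the compact set $U$, so a Putinar certificate in those coefficients makes it finite-dimensional.
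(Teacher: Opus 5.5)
\emph{Verdict: genuine gap, shared with the paper's own proof.}

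Your route is essentially the paper's. You lift $\max_{x}$ to a supremum over probability measures on $X_{\bar s}$, obtain a bound uniform in $\pi$ from a single measure (the paper's inequality $\min_{\pi}\sup_{\mu}\int g\,d\mu\ge\sup_{\mu}\min_{\pi}\int g\,d\mu$), then replace $\mu$ by a truncated moment vector and dualize.

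The genuine gap is the one you flag in Step~2, and nothing in the lemma's hypotheses closes it. A vector $y$ with $y_0=1$, $M_d(y)\succeq 0$, $M_{d-d_h}(h_{\bar s}(y))\succeq 0$ is only a pseudo-moment vector. As your Step~1 shows, every true moment vector satisfies these constraints, so optimizing over them gives an \emph{upper} bound on $\sup_{\mu}\min_{\pi}\int g\,d\mu$. That is the wrong direction for a lower bound on the min--max. Concretely, for $g$ independent of $\pi$ the stated conclusion amounts to exactness of the level-$d$ Lasserre relaxation of $\max_{X_{\bar s}} g$, which fails in general.

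The paper's proof has the same hole. Its claim that the primal moment SDP ``provides a lower bound on the min--max value'' is exactly this step. Lemma~\ref{lemma:zero-dual} does not repair it, because $J_p=J_d$ relates the two relaxations to each other, not to the true min--max.

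Your repairs are the right kind. If the particular feasible $y$ satisfies a flat-extension rank condition, it has an atomic representing measure on $X_{\bar s}$, and Step~2 goes through verbatim. Alternatively, one can assume exactness at level $d$. Either way these are additional hypotheses, so what you would prove is a corrected lemma, not the one stated.

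Three smaller points.
\begin{itemize}
\item Step~3 cannot supply the lower bound either. Pairing $\lambda-g=s_0$ with $y$ gives $\lambda\ge\sum_\eta g_\eta(\pi)\,y_\eta\ge\sigma$, so dual-feasible values only bound from above.
\item The identity $\int g(\pi(x),x)\,d\mu=\sum_\eta g_\eta(\pi(x))\,y_\eta$ in Step~1 holds only if $\pi$ denotes fixed controller parameters (e.g.\ the gain $K$ of $\pi(x)=-Kx$), with the $x$-dependence of $\pi(x)$ absorbed into the monomials $x^\eta$. The infimum is then over a parameter set induced by $U$, not over $U$ itself.
\item Your last sentence overreaches. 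Condition~\eqref{tiv-constraint4-finite} with $\beta=0$ requires $g(\pi(x),x)\ge 0$ for \emph{all} $x\in X_{\bar s}$ under the chosen controller, i.e.\ $\max_{\pi}\min_{x}g\ge 0$. By contrast, $\min_{\pi}\max_{x}g\ge 0$ only says each controller satisfies it at some state. So even a correct proof of the inequality would not give the hypothesis of Corollary~\ref{col:reach-avoid time-invarying}.
\end{itemize}
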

\begin{proof}
For fixed \(\pi\in U\), the inner maximization can be expressed as
\[
\max_{x\in {X_{\bar \safe}} } g(\pi,x)
= \sup_{\mu \in \mathcal{P}({X_{\bar \safe}} )} \int_{{X_{\bar \safe}} } g(\pi,x)\, d\mu(x),
\]
where \(\mathcal{P}({X_{\bar \safe}} )\) is the set of probability measures supported on \({X_{\bar \safe}} \). Hence,
\[
\min_{\pi\in U}\max_{x\in {X_{\bar \safe}} } g(\pi,x)
= \min_{\pi\in U} \sup_{\mu \in \mathcal{P}({X_{\bar \safe}} )} \int g(\pi,x)\, d\mu(x).
\]
By weak duality,
\[
\min_{\pi\in U} \sup_{\mu\in\mathcal{P}({X_{\bar \safe}} )} \int g\, d\mu
\ge \sup_{\mu\in\mathcal{P}({X_{\bar \safe}} )} \min_{\pi\in U} \int g\, d\mu.
\]
Approximating \(\mu\) by its truncated moment sequence \(y\) subject to
\[
M_d(y)\succeq 0, \quad 
M_{d-d_h}\big({h_{\bar s}}  (y)\big) \succeq 0, \quad 
y_0 = 1,
\]
and enforcing
\[
\sum_\eta g_\eta(\pi)\, y_\eta \ge \sigma, \quad \forall \pi \in U,
\]
the problem of maximizing \(\sigma\) can be cast as the 
primal moment SDP \eqref{eq:primal-moment}, which provides a lower bound on the min–max value. Then, dualizing the infinite constraint over \(\pi\) via Putinar’s Positivstellensatz \cite{baldi2023effective} produces the SOS program \eqref{eq:dual-sos}.  
Hence, feasible solutions of \eqref{eq:primal-moment}--\eqref{eq:dual-sos} certify
\[
\min_{\pi \in U} \max_{x \in {X_{\bar \safe}} } g(\pi,x) \ge \sigma^\star \ge 0,
\]
establishing the synthesis constraint.
\end{proof}

Intuitively,~\eqref{eq:primal-moment} ensures that the moment variables represent a valid probability measure, while~\eqref{eq:dual-sos} certifies via SOS that $\lambda$ upper-bounds $g$ over the domain. This result shows that the nonconvex min–max problem in~\eqref{eq: min-max problem}, which jointly optimizes over controller parameters and worst-case states, can be relaxed into a convex  program via SOS relaxations. This  allows us to compute a controller and its corresponding reach-avoid certificate simultaneously. 

The following lemma shows that there is zero duality gap between the primal \eqref{eq:primal-moment} and dual \eqref{eq:dual-sos} under certain conditions.

\begin{lemma}[Zero Duality Gap]
\label{lemma:zero-dual}
    Let $X \subset\mathbb{R}^n$ be a compact set and assume the moment and localizing matrix constraints satisfy Slater's condition, i.e., there exists a strictly feasible point \cite{boyd2004convex}.
    Then, for any relaxation degree $d$, the primal moment SDP \eqref{eq:primal-moment} and the dual SOS program \eqref{eq:dual-sos} have zero duality gap, i.e., $J_p = J_d$.
Consequently, any feasible solution $\sigma^\star$ of the dual SOS program certifies a valid lower bound on the original min-max problem in~\eqref{eq: min-max problem}.
\end{lemma}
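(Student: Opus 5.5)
The plan is to treat \eqref{eq:primal-moment} and \eqref{eq:dual-sos} as a primal--dual pair of conic programs, and to invoke strong duality from conic duality theory under Slater's condition \cite{boyd2004convex}.

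\textbf{Step 1: write the primal as a finite-dimensional conic LP.} At fixed relaxation degree $d$, the decision variables are the finite truncated moment vector $y$ and the scalar $\sigma$. The objective is linear. The constraints are:
\begin{itemize}
\item the affine equality $y_0=1$;
\item the linear matrix inequalities $M_d(y)\succeq 0$ and $M_{d-d_h}(h_{\bar s}(y))\succeq 0$, whose entries are linear in $y$;
\item the family $\sum_\eta g_\eta(\pi)y_\eta\ge\sigma$ for all $\pi\in U$.
\end{itemize}
The last family is the delicate part. For fixed $y$, the map $\pi\mapsto\sum_\eta g_\eta(\pi)y_\eta-\sigma$ is a polynomial in $\pi$. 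Requiring it to be nonnegative on the compact set $U$ is a constraint in the cone of polynomials nonnegative on $U$. I would replace this cone by its truncated Putinar (quadratic-module) inner approximation in the variable $\pi$, of the same degree used in \eqref{eq:dual-sos}. The primal then becomes a linear program over a product of PSD cones.

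\textbf{Step 2: compute the Lagrangian dual and identify it with \eqref{eq:dual-sos}.} I would attach multipliers as follows:
\begin{itemize}
\item a scalar $\lambda$ to $y_0=1$;
\item PSD Gram matrices to the moment and localizing LMIs;
\item SOS certificates to the $\pi$-constraint.
\end{itemize}
Eliminating $y$ and $\sigma$ then forces coefficient matching. The result is that $\lambda-g$ equals an SOS combination, namely $s_0$ plus the localizing terms weighted by $h_{\bar s}$, which is exactly the identity in \eqref{eq:dual-sos}. Weak duality $J_p\le J_d$ then follows by the standard pairing: for any feasible pair, $\lambda-\sigma$ equals a sum of traces of products of PSD matrices, which is nonnegative.

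\textbf{Step 3: strong duality, the main step.} Slater's condition is assumed: there is a $y$ with $M_d(y)\succ0$ and $M_{d-d_h}(h_{\bar s}(y))\succ0$. I would build a strictly feasible primal point concretely as the moments of the uniform measure on a ball inside $X_{\bar\safe}$; the assumption covers this existence anyway. Choosing $\sigma$ strictly below $\min_{\pi\in U}\sum_\eta g_\eta(\pi)y_\eta$ makes the $\pi$-constraint strict too. This minimum is finite because $U$ is compact. Next I need the primal value to be bounded. Compactness of $X$ gives a ball constraint $R^2-\|x\|^2\ge0$, possibly added redundantly, and this bounds the diagonal of $M_d(y)$, so the feasible $y$ lie in a bounded set. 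Hence $J_p<\infty$. Conic strong duality (Slater on one side plus a finite optimal value) then gives $J_p=J_d$, with the dual optimum attained.

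The obstacle I expect is handling the infinite family of $\pi$-constraints rigorously. Strictly speaking, one must work with the truncated quadratic module in $\pi$, and check that strict feasibility is preserved there. Compactness of $U$ together with an Archimedean description $h_U$ ensures that a strictly positive polynomial lies in the interior of that truncated cone for a sufficiently high degree.

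\textbf{Step 4: conclude.} Given $J_p=J_d$, any feasible dual value equals or bounds the primal value $\sigma^\star$. Lemma~\ref{lemma:dual} then gives
\[
\min_{\pi\in U}\max_{x\in X_{\bar\safe}} g\ge\sigma^\star,
\]
which is the claimed certification of \eqref{eq: min-max problem}.
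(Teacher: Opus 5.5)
Your strategy is the paper's: Slater's condition plus conic strong duality, with compactness of $X_{\bar s}$ as the auxiliary ingredient. The paper cites the Archimedean property and then "standard SDP duality" without writing the dual out. You derive the dual explicitly in Step~2, and that derivation does not produce \eqref{eq:dual-sos}.

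The constraint $\sum_\eta g_\eta(\pi)y_\eta\ge\sigma$ for all $\pi\in U$ is a cone constraint, so its multiplier lives in the dual cone. That multiplier is a nonnegative measure $\nu$ on $U$ (or, after your truncation, a pseudo-moment sequence in $\pi$ with PSD moment and localizing matrices), not an SOS certificate. Carrying out the elimination, the coefficient of $\sigma$ forces $\nu(U)=1$, and matching the coefficients of $y_\eta$ gives
\[
\lambda-\int_U g(\pi,x)\,d\nu(\pi)=s_0(x)+\mathcal{L}(x)^\top h_{\bar s}(x),
\]
with $\nu$ an additional dual variable (a mixed strategy over controller parameters). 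This is not the identity in \eqref{eq:dual-sos}, which has neither the localizing terms nor the average over $\pi$. The difference matters. Take $g=-x^2$ (independent of $\pi$) and $X_{\bar s}=[1,2]$ described by $h_{\bar s}=(x-1)(2-x)$. The moment relaxation gives $J_p=-1$, while requiring $\lambda+x^2$ to be SOS gives $J_d=0$. So Slater yields equality of $J_p$ with the dual you actually derived, not with \eqref{eq:dual-sos}. The claim $J_p=J_d$ only holds once \eqref{eq:dual-sos} is replaced by that dual.

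Two further points. First, truncating the $\pi$-constraint to a Putinar cone in Step~1 shrinks the primal feasible set. You would then be proving a zero gap for a program whose value can lie strictly below $J_p$, and the truncation is what creates the interior-of-truncated-cone difficulty you flag. It is unnecessary. View $\pi\mapsto\sum_\eta g_\eta(\pi)y_\eta-\sigma$ as an element of $C(U)$ constrained to the cone of nonnegative functions; for compact $U$ this cone has nonempty interior (the strictly positive functions). Lowering $\sigma$ puts you in that interior, so Slater reduces exactly to the assumed strict feasibility of the moment and localizing LMIs. Lagrangian duality with the Riesz representation then produces the probability measure $\nu$ above.

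Second, finiteness of $J_p$ is not needed for a zero gap: if $J_p=+\infty$, weak duality makes the dual infeasible. So the redundant ball constraint, which alters both programs, can be dropped. Finally, a dual-feasible $\lambda$ only upper-bounds $J_p$. The lower bound on \eqref{eq: min-max problem} in your Step~4 therefore comes entirely from Lemma~\ref{lemma:dual}, not from the zero gap.
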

\begin{proof}
Since ${X_{\bar \safe}} $ is compact and the quadratic module generated by its defining polynomials is Archimedean (i.e., it contains a polynomial of the form $R - \|x\|^2$ for some $R>0$, which ensures boundedness of ${X_{\bar \safe}} $), Putinar's Positivstellensatz guarantees that any strictly positive polynomial on ${X_{\bar \safe}} $ admits an SOS representation \cite{nie2007complexity}.  
Combined with Slater's condition for the moment SDP, standard SDP duality \cite{boyd2004convex} implies that the primal and dual achieve the same optimal value at relaxation degree $d$.
\end{proof}

A direct consequence of this lemma is that, as the SOS relaxation degree $d\to\infty$, the SOS hierarchy (primal and dual) recovers the true min-max value of the problem in~\eqref{eq: min-max problem}.


We are ready to present our main synthesis theorem, that uses the proceeding lemmas to formulate one SOS program.

\begin{theorem}[Time-Invariant, Unbounded Horizon Reach-\\Avoid Controller Synthesis]
    \label{th:one-shot-synthesis}
    Consider the setting in Corollary~\ref{col:one-shot-certificate-ti}, where controller $\pi$ is given as a parameterized polynomial. 
    Then, optimal controller $\pi^*$ and its corresponding \emph{reach-avoid certificate} $\RA$ can be obtained by solving the following SOS optimization problem 
    \begin{subequations}
\label{pr:outer}
\begin{align}
& \hspace{5mm} \max_{\gamma, \alpha,\pi} \quad \gamma - \alpha \quad 
         \text{subject to:} \nonumber \\
& \RA(x) \in \Lambda, \\
- & \RA(x) + 1 + \alpha - \mathcal{L}_\reach(x)^\top h_\reach(x) \in \Lambda, \\
 - & \RA(x) + \alpha - \mathcal{L}_\unsafe(x)^\top h_\unsafe(x) \in \Lambda, \\
& \RA(x) - \gamma + \mathcal{L}_\initial(x)^\top h_\initial(x) \in \Lambda, \\
& g^\star(x) := \min_{\pi(x) \in U} g(\pi(x),x) \ge 0, \qquad \qquad \forall x \in {X_{\bar \safe}},   \label{eq: inner opt constraint}
\end{align}
\end{subequations}
where the inner problem in \eqref{eq: inner opt constraint} is the dual SOS synthesis from Lemma~\ref{lemma:dual}, 
\begin{subequations}
\label{pr:inner-sos}
\begin{align}
g^\star(x) = \inf_{\lambda, s_0} \ & \lambda \\
\text{s.t.} \quad 
& \lambda - g(\pi(x)) = s_0(\pi(x)), \\
& s_0 \in \Lambda.
\end{align}
\end{subequations}
A feasible solution to this optimization guarantees the probabilistic reach-avoid bound in~\eqref{eq: probability bound for TI reach-avoid}.
\end{theorem}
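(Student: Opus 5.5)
The plan is to reduce the claim to the infinite-horizon part of Corollary~\ref{col:reach-avoid time-invarying}. Let $(\gamma,\alpha,\pi,\RA)$ be a feasible solution of \eqref{pr:outer} with a polynomial controller $\pi$ satisfying $\pi(x)\in U$. It suffices to show that the closed-loop certificate $\RA$ under this $\pi$ satisfies conditions \eqref{tiv-constraint1}--\eqref{tiv-constraint5} with $\beta=0$. The Corollary then gives $\Pinf\ge\gamma-\alpha$, which is the bound \eqref{eq: probability bound for TI reach-avoid} in its infinite-horizon form \eqref{eq: probability bound for TI infinite reach-avoid}.

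The first four SOS constraints are handled exactly as in the proof of Theorem~\ref{th:one-shot-certificate}.
\begin{itemize}
\item $\RA\in\Lambda$ gives global nonnegativity, which is \eqref{tiv-constraint1}.
\item For the remaining three, Corollary~\ref{cor:sosp1} applies: where $h_j(x)\ge 0$, the multiplier term $\mathcal{L}_j^\top h_j$ is nonnegative.
\item So $-\RA+1+\alpha-\mathcal{L}_\reach^\top h_\reach\in\Lambda$ gives $\RA\le 1+\alpha$ on $X_\reach$, which is \eqref{tiv-constraint2}.
\item Likewise $\RA\le\alpha$ on $X_\unsafe$, which is \eqref{tiv-constraint3}, and $\RA\ge\gamma$ on $X_0$, which is \eqref{tiv-constraint5}.
\end{itemize}
These steps do not involve the controller, so they are routine.

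The substantive step is the expectation condition \eqref{tiv-constraint4-finite} with $\beta=0$, namely $\mathbb{E}[\RA(\px')\mid x,\pi]-\RA(x)=g(\pi(x),x)\ge 0$ on $X_{\bar\safe}$.
\begin{itemize}
\item Because $f$ is polynomial and $\pw$ has a closed-form moment generating function, $g$ is a polynomial in $(\pi(x),x)$, as assumed in Lemma~\ref{lemma:dual}.
\item Constraint \eqref{eq: inner opt constraint} requires $g^\star(x)=\min_{u\in U}g(u,x)\ge 0$ for every $x\in X_{\bar\safe}$.
\item The chosen controller takes values in $U$. Hence for each $x$, $g(\pi(x),x)\ge g^\star(x)\ge 0$, which is exactly \eqref{tiv-constraint4-finite} with $\beta=0$.
\end{itemize}

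What remains is to justify that the SOS program \eqref{pr:inner-sos} soundly enforces $g^\star\ge0$. For this I would invoke Lemma~\ref{lemma:dual}: a feasible relaxation value $\sigma^\star\ge0$ certifies the min--max inequality \eqref{eq: min-max problem}. Lemma~\ref{lemma:zero-dual} (compactness, the Archimedean property and Slater's condition) then ensures that the dual SOS value coincides with the primal moment bound. Consequently no feasible SOS certificate can overstate $g^\star$.

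I expect this last link to be the main obstacle. One must argue carefully that feasibility of the SOS relaxation implies the pointwise inequality $g(\pi(x),x)\ge0$ for the particular synthesized polynomial $\pi$, and not merely an inequality for the min--max value. The cleanest route I would try first is to read the relaxation as producing a polynomial $\pi$ together with an SOS/Putinar certificate $g(\pi(x),x)-\mathcal{L}_{\bar\safe}(x)^\top h_{\bar\safe}(x)\in\Lambda$. This certificate gives the pointwise inequality directly by Corollary~\ref{cor:sosp1}, and it avoids relying on the weak-duality direction. Finally, I would note that $\gamma-\alpha$ is maximized jointly over $\pi$, so the synthesized $\pi^*$ attains the best certified bound within the chosen polynomial class.
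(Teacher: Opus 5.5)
Your reduction is the paper's. You discharge the four certificate constraints as in the proof of Theorem~\ref{th:one-shot-certificate}, read the $\beta=0$ expectation condition off \eqref{eq: inner opt constraint}, and then Corollary~\ref{col:reach-avoid time-invarying} gives $\gamma-\alpha$. There is one slip, inherited from the statement. With the multiplier entering as $+\mathcal{L}_\initial^\top h_\initial$, the fourth constraint only gives $\RA\ge\gamma-\mathcal{L}_\initial^\top h_\initial$ on $X_\initial$, not $\RA\ge\gamma$. The constraint needs the opposite sign: $\RA-\gamma-\mathcal{L}_\initial^\top h_\initial\in\Lambda$.

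The real difference is the last link, and your suspicion there is justified. The paper's proof just cites Lemmas~\ref{lemma:dual} and~\ref{lemma:zero-dual}, as your first attempt does. However, your sentence ``no feasible SOS certificate can overstate $g^\star$'' is false. Program~\eqref{pr:inner-sos} minimizes $\lambda$ subject to $\lambda-g$ being SOS, so every feasible $\lambda$ is an upper bound on $g$ and hence at least $g^\star$. Its nonnegativity therefore says nothing about the sign of $g$, and a zero duality gap does not reverse that direction. Lemma~\ref{lemma:dual} also has the quantifiers the wrong way round. The condition $\min_\pi\max_x g(\pi(x),x)\ge 0$ only says that every controller makes $g$ nonnegative at some state. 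Condition \eqref{tiv-constraint4-finite} instead needs one controller making $g$ nonnegative at every state of $X_{\bar\safe}$. So the lemma route does not establish soundness.

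What does close the argument is either of your other two observations. Read literally as a true minimum over $U$, \eqref{eq: inner opt constraint} already finishes the proof through your three bullets, with no SOS step at all. Note that a minimum over $U$ certifies every admissible control value, so nothing is really being synthesized there. Alternatively, enforce the condition directly as $g(\pi(x),x)-\mathcal{L}_{\bar\safe}(x)^\top h_{\bar\safe}(x)\in\Lambda$. Then Proposition~\ref{prop:sosp1} gives the pointwise inequality for the specific $\pi$ being synthesized.

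That route has a cost. The constraint is bilinear in the coefficients of $\RA$ and $\pi$, because $\RA$ is composed with $f(x,\pi(x),\pw)$. You get a sound but nonconvex program, with no guarantee of reaching the optimal $\pi^\ast$. The paper's lemma-based route advertises convexity and optimality through duality, but as written it does not deliver the bound.
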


\begin{proof}
The result follows directly from Theorem~\ref{th:one-shot-certificate}, which establishes the SOS reach-avoid certificate conditions, together with application of Lemmas~\ref{lemma:dual} and~\ref{lemma:zero-dual}, which ensure that the inner min--max problem can be replaced by a dual SOS program, and zero duality gap, respectively. 
\end{proof}


\begin{remark}
    The synthesis procedures for the time-varying finite-and infinite-horizon cases follow analogously to Theorem~\ref{th:one-shot-synthesis}.
    Furthermore, for infinite-horizon problems, controller $\pi$ suffices to be time-invariant for optimality.  However, for finite-horizon problems with time-varying certificates, $\pi$ needs to be time-varying to guarantee optimality.
\end{remark}




The above controller synthesis problem involves solving an inner optimization program. This nested structure increases the computational complexity compared to the verification problem, since each certificate requires solving a dual SOS problem over the admissible control set.

\section{Experiments}
\label{sec:experiments}

In this section, we evaluate the proposed reach-avoid certificates and controller synthesis on a range of stochastic systems exhibiting linear or polynomial dynamics. 
The first set of experiments presents a benchmark comparison with the frameworks in \cite{vzikelic2023learning} and \cite{xue2025finite}, corresponding to infinite- and finite-horizon certificates, respectively, to illustrate the effectiveness of our formulation.
In particular, we show that $\alpha$-relaxations are more suitable for SOS optimization than the one in \cite{vzikelic2023learning}. Then, we conducted two categories of experiments: verification (certificate synthesis) and controller synthesis (joint synthesis of controller and certificate). 
In the synthesis experiments, a linear feedback template $\pi(x) = -K x$ was used. 

We implemented our algorithms in Julia and used the \texttt{SumOfSquares.jl} \cite{weisser2019polynomial} and \texttt{JuMP} \cite{dunning2017jump} packages for SOS optimization. All experiments were executed on a machine equipped with a 3.9 GHz 8-core CPU and 128 GB of RAM. A summary of the results is provided in 
Tables~\ref{tab:benchmark3}-\ref{tab:reach-avoid-2}.
In what follows, we provide a description of the considered systems, followed by a presentation of the corresponding results.

\subsection{Benchmark Systems}
\label{sec:benchsys}

\subsubsection{1D Linear Systems}
To showcase the intuition of each approach, we consider two one-dimensional systems with stable and unstable dynamics, both with $\pw_k \sim \N(0, 0.1)$. The dynamics of the first system are
\begin{align}
    \label{eq:1D_linear_stable}
    \px_{k+1} = 0.75 \px_k + \pw_k,
\end{align}
with \( X_{\safe} = [1, 3] \), \( X_{\initial} = [1.8, 1.9] \), \( X_{\reach} = [0.7, 1.0] \), and \( X_{\unsafe} = [3.0, 3.3] \).
The second system is
\begin{align}
    \label{eq:1D_linear_unstable}
    \px_{k+1} = 1.15 \px_k + \pw_k,
\end{align}
with \( X_{\safe}  = [1, 3] \), \( X_{\initial} = [2.2, 2.3] \),
\( X_{\reach} = [3.0, 3.5] \), and \(  X_{\unsafe}= [0.7, 1.0] \). 
For both the above systems, the control $U \in [-0.25, 0.25]$ for linear feedback control synthesis.

\subsubsection{2D Contraction Maps}

Next, we consider two \emph{contraction} systems in $\reals^{2}$ with a rotation. For the first contraction map, the dynamics are given as
\begin{align}
    \label{eq:contraction_map_1}
    \px_{k+1} =
    \begin{bmatrix}
        0.75 & 0.15\\
        -0.15 & 0.50
    \end{bmatrix}\px_k + \pw_k,
\end{align}
with $X_{\safe} = [-0.2, 2]\times[-0.5, 0.5],~X_{\initial} = [1.6, 1.7]\times[-0.1, 0.1],$, $X_{\reach} = [-0.2, 0.2]\times[-0.5, 0.5],$ and the unsafe region is around the safe set.  The noise is $\pw_k \sim \N(0, 10^{-2} I)$. The environment and 1000 Monte Carlo simulations are presented in Fig.~\ref{fig:mc1}. The empirical probability of reach-avoid for this scenario is $P_\reach = 0.96$, for $H=100$ steps.

The second case study has dynamics
\begin{align}
    \label{eq:contraction_map_2}
    \px_{k+1} =
    \begin{bmatrix}
        0.50 & 0.55\\
        -0.35 & 0.50
    \end{bmatrix}\px_k + \pw_k,
\end{align}
with the setup being the same as the previous case. For this system, if uncontrolled, the Monte Carlo probability of reach is $P_\reach = 0.19$. The simulation is provided in Fig.~\ref{fig:mc2}. The control space is bounded $U \in [-0.25, 0.25]^2$, for linear feedback control synthesis.

\begin{figure*}[t]
    \centering
    \begin{subfigure}[t]{0.32\textwidth}
        \includegraphics[width=\linewidth]{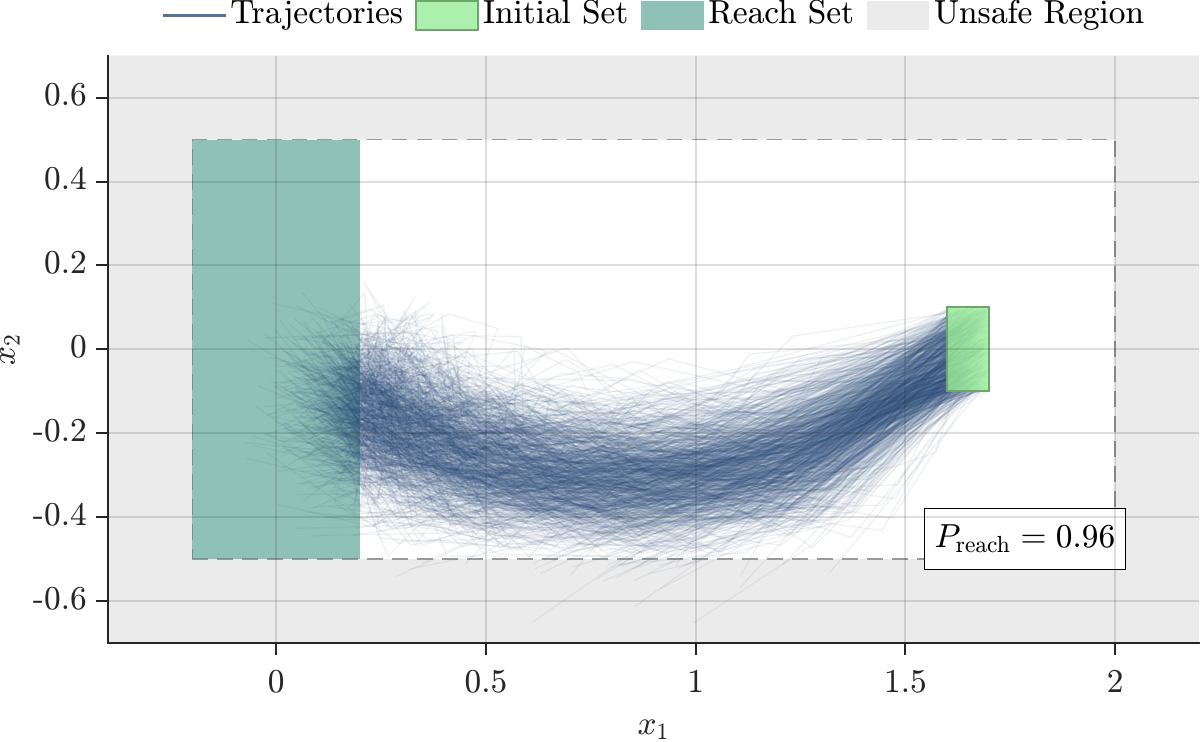}
        \caption{Contraction Map 1: empirical $P_r = 0.96$ }
        \label{fig:mc1}
    \end{subfigure}
    \hfill
    \begin{subfigure}[t]{0.32\textwidth}
        \includegraphics[width=\linewidth]{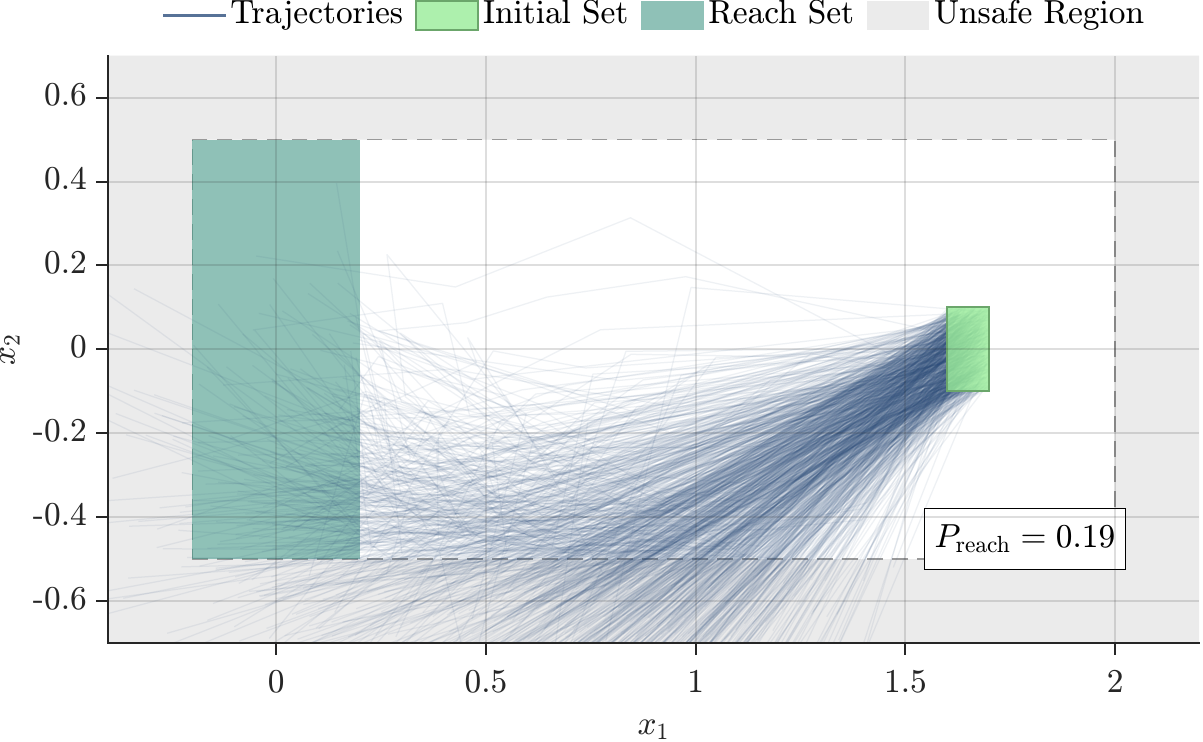}
        \caption{Contraction Map 2: empirical $P_r = 0.19$}
        \label{fig:mc2}
    \end{subfigure}
    \hfill
    \begin{subfigure}[t]{0.32\textwidth}
        \includegraphics[width=\linewidth]{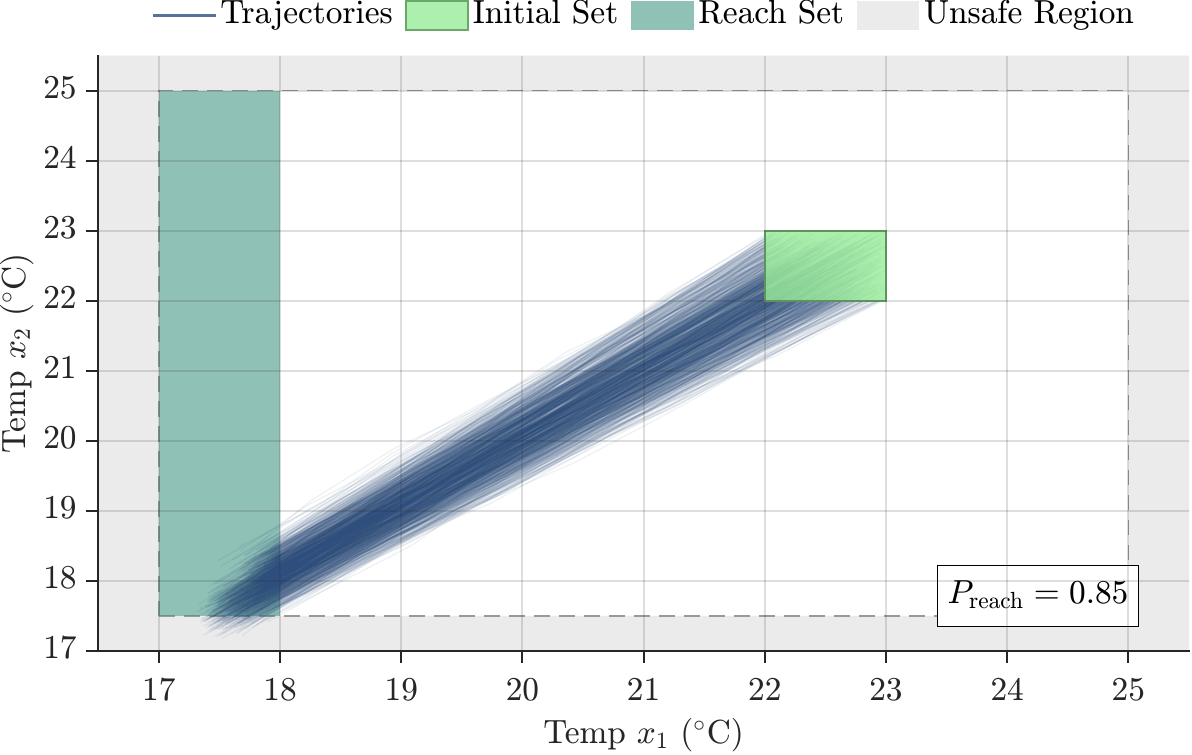}
        \caption{Room temperature: empirical $P_r = 0.85$}
        \label{fig:mc3}
    \end{subfigure}
    \caption{Monte Carlo simulations for different linear and polynomial verification systems.}
    \label{fig:mc-verification}
\end{figure*}
\begin{figure*}[t]
    \centering
    \begin{subfigure}[t]{0.32\textwidth}
        \includegraphics[width=\linewidth]{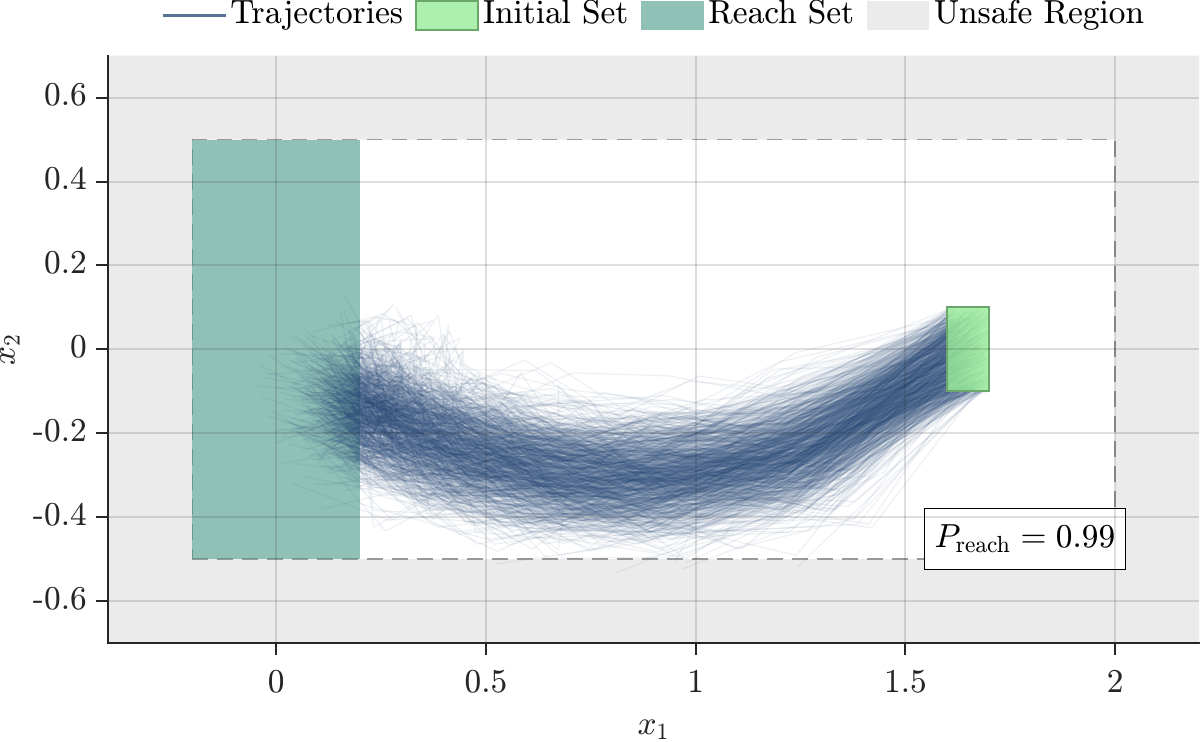}
        \caption{Contraction Map 1: empirical $P_r = 0.994$}
        \label{fig:mc1-c}
    \end{subfigure}
    \hfill
    \begin{subfigure}[t]{0.32\textwidth}
        \includegraphics[width=\linewidth]{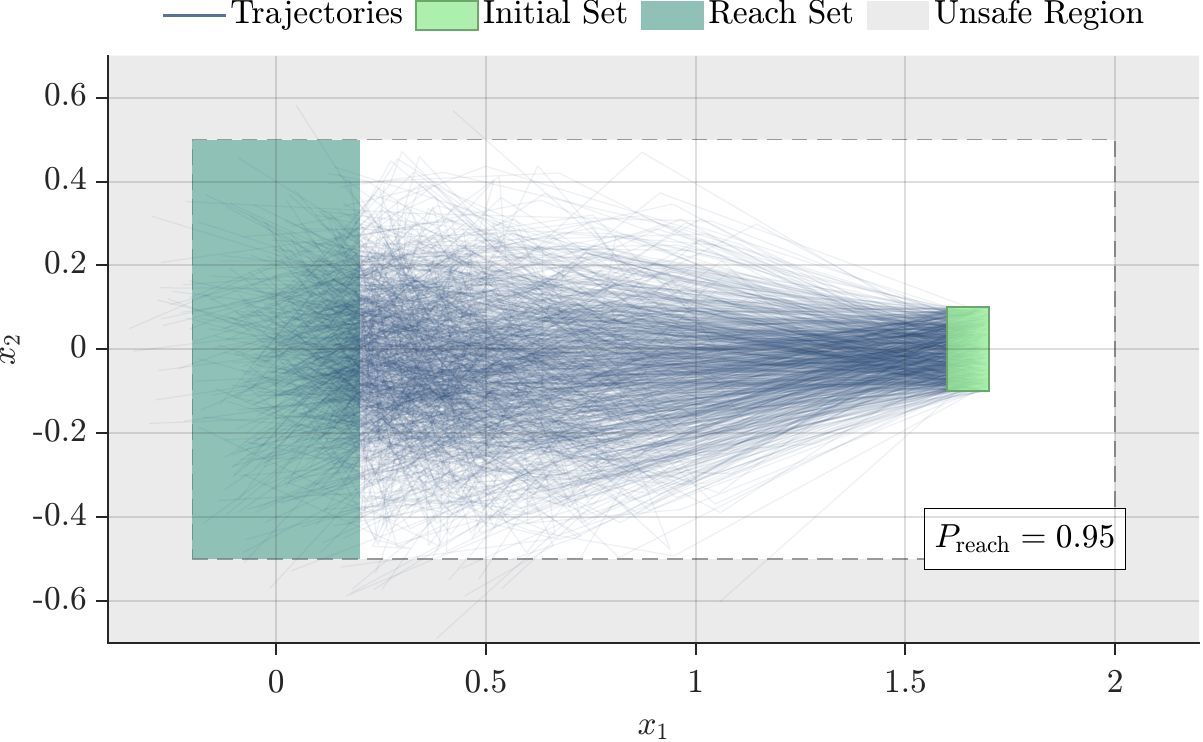}
        \caption{Contraction Map 2: empirical $P_r = 0.95$}
        \label{fig:mc2-c}
    \end{subfigure}
    \hfill
    \begin{subfigure}[t]{0.32\textwidth}
        \includegraphics[width=\linewidth]{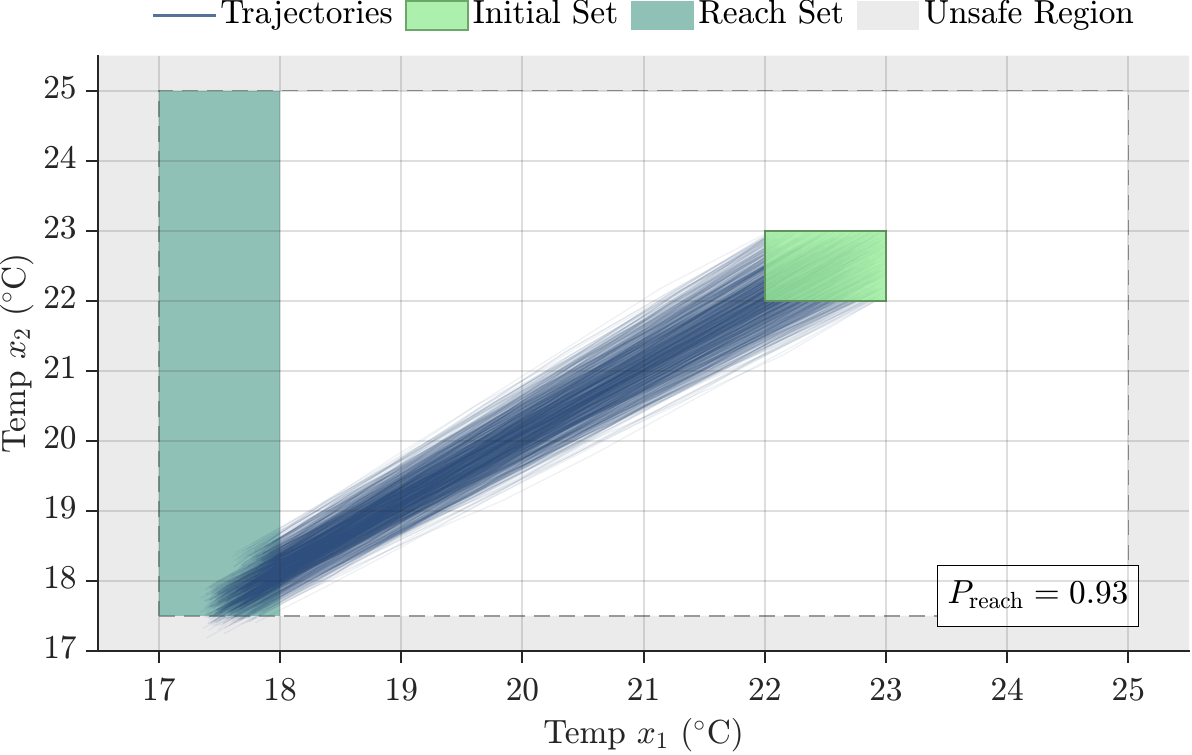}
        \caption{Room temperature: empirical $P_r = 0.93$}
        \label{fig:mc3-c}
    \end{subfigure}
    \caption{Monte Carlo simulations under synthesized controllers.}
    \label{fig:mc-synthesis}
\end{figure*}

\subsubsection{3D Aircraft Model}
\label{sec:aicraft}
We also consider a linearized aircraft model in $\reals^{3}$, with continuous-time longitudinal dynamics
{\footnotesize
\begin{align*}
    \dot{\px} =
    \begin{bmatrix}
        X_u/m & X_w/m & -g\cos\theta_0/m\\
        Z_u/(m-Z_{\dot w}) & Z_w/(m-Z_{\dot w}) & (Z_q+mU_0)/(m-Z_{\dot w})\\
        M_u/I_{yy} & M_w/I_{yy} & M_q/I_{yy}
    \end{bmatrix} \px,
\end{align*}
}
where $\mathbf{x} = [u, w, q]^\top$ denotes the forward velocity, vertical velocity, and pitch rate perturbations. Further, $m$ denotes the aircraft mass, and $I_{yy}$ the pitch-axis moment of inertia. We used a forward Euler scheme with $\Delta t = 0.1$ to discretize time, and use a LQR controller to stabilize the system near the reach set. Then noise $\pw_k \sim \N(0, 10^{-2}I)$ was added to the system. 
All parameter values of this model are provided in Appendix~\ref{parameters}.
The empirical reach-avoid probability was $P_\reach = 0.99$.

\subsubsection{2D \& 3D Room Temperature}
We finally consider a 2D and a 3D room temperature system with polynomial dynamics.  The 2D model is
{\small
\begin{align*}
    \textbf{x}_1(k+1) & = (1 - \tau(\alpha + \alpha_e)) \px_1(k) + \tau \alpha \px_2(k) + \tau \alpha_e T_e + \pw_1(k) \\
    \textbf{x}_2(k+1) & = (1 - \tau(2\alpha + \alpha_e)) \px_2(k) + \tau \alpha \px_1(k) + \tau \alpha_e T_e + \pw_2(k)
\end{align*}
}
where $\pw \sim \N(0, 10^{-2}I)$. The states $\px_1$ and $\px_2$ define the temperature of each room. The ambient temperature $T_e = 10^\circ$C, while the heat coefficients are $\alpha_e = 8 \times 10^{-3}$ and $\alpha = 6.2 \times 10^{-3}$. The sampling time $\tau$ = 5 minutes. The regions of interest are depicted in Fig.~\ref{fig:mc3}, and the empirical probability of reach-avoid is $P_\reach = 0.85$. The controller for the linear feedback controller is allowed to range between $U = [-0.5, 0.5]$. 
We also consider an equivalent of this system in $\reals^{3}$, which has an empirical $P_\reach = 0.82$.

\subsection{Results and Discussions}
{We are finally ready to present and discuss our results. We first present a comparison of our time-invariant certificate against the one in \cite{vzikelic2023learning} (Table~\ref{tab:benchmark3}), followed by comparison of our time-invariant and time-varying formulations in verification (Table~\ref{tab:reach-avoid-1}) and control synthesis (Table~\ref{tab:reach-avoid-2}) problems over finite- and infinite-time horizons.}

\begin{table}[t]
\caption{Time-invariant certificate comparison against \cite{vzikelic2023learning}. \texttt{deg} is the degree of SOS polynomial, $\low{P}_\reach$ is the obtained lower-bound on $P_\reach$, and $\tau$ is the computation time in seconds.}
\label{tab:benchmark3}
\centering
\resizebox{\columnwidth}{!}{
\begin{tabular}{c | c | cr | cr }
\toprule
\toprule
\multirow{3}{*}{Model}  &
\multirow{3}{*}{\texttt{deg}} 
& \multicolumn{4}{c}{$\text{Time-Invariant Reach-Avoid Certificate}$}  \\
& & \multicolumn{2}{c}{\underline{\quad $V(x)$ (\!\!\cite{vzikelic2023learning}) \quad } } 
& \multicolumn{2}{c}{\underline{\quad $\RA(x)$ (ours) \quad}  } \\
& & $\low{P}_\reach$ & $\tau(s)$ \ 
& $\low{P}_\reach$ & $\tau(s)$ \ \\ 
\midrule

{1D Linear} \texttt{Map 1} & 8  & 0.16 & 0.28 & \textbf{0.76} & 0.18 \\
\eqref{eq:1D_linear_stable} & 10 & 0.17 & 1.13 & \textbf{0.96} & 1.85 \\

\midrule
{2D Contraction} \texttt{Map 1} & 10 & \texttt{FAILED} & 0.97 & \textbf{0.56} & 1.06 \\
$X_{\initial}^1$ & 14 & 0.15 & 2.19 & \textbf{0.91} & 2.11 \\
 & 20 & 0.43 & 18.69 & \textbf{0.96} & 21.09 \\

\midrule
{2D Contraction} \texttt{Map 1} & 14 & 0.42 & 2.25 & \textbf{0.86} & 2.14 \\
$X_{\initial}^2$ & 20 & 0.49 & 30.13 & \textbf{0.94} & 19.46 \\
 & 24 & 0.54 & 71.59 & \textbf{0.97} & 79.48 \\

\midrule
{2D Contraction} \texttt{Map 1} & 14 & 0.24 & 2.48 & \textbf{0.65} & 2.26 \\
$X_{\initial}^3$ & 20 & 0.34 & 23.86 & \textbf{0.81} & 19.23 \\
 & 24 & \texttt{FAILED} & 72.14 & \textbf{0.89} & 71.18 \\

\bottomrule
\end{tabular}
}
\end{table}

\begin{table*}[t]
\caption{Reach-avoid \texttt{verification} results for finite- and infinite-time horizons. \texttt{deg} is the degree of SOS polynomial, $\low{P}_\reach$ is the obtained lower-bound on $P_\reach$, $\tau$ is the computation time in seconds.
}
\label{tab:reach-avoid-1}
\centering
\scalebox{1}
    {
\begin{tabular}{c | cccr | ccr | cccr | cccr}
\toprule 
 \toprule  
  & & \multicolumn{6}{c}{$\text{Time-Invariant Certificate}$}  
   & \multicolumn{8}{c}{\text{Time-Varying Certificate}}  \\
 \multirow{ 2}{*}{Model} & 
 & \multicolumn{3}{c}{$\texttt{Finite-Horizon}$}  
   & \multicolumn{3}{c}{\texttt{Infinite-Horizon}}  
    & \multicolumn{4}{c}{$\texttt{Finite-Horizon}$}  
   & \multicolumn{4}{c}{\texttt{Infinite-Horizon}}  \\
   \\
  & $H$ & \texttt{deg} &   $\low{P}_\reach$ & $\tau(s) \ $ 
   & \texttt{deg} & $\low{P}_\reach$   & $\tau(s) \ $ 
  & $H$ & \texttt{deg} &   $\low{P}_\reach$ & $\tau(s) \ \ $ 
   &\texttt{k} &  \texttt{deg} & $\low{P}_\reach$   & $\tau(s) \ \ $ \\
 \midrule
  {1D Linear}  &  10   & 10  & 0.76 &  0.16  & 10 & 0.76 & 0.19 & 10 & 6 & 0.98 & 1.27 & 10 & 6 &1.0 & 2.52  \\
  \eqref{eq:1D_linear_stable}    &  10  & 20 & 0.99 & 0.19 & 20 & 0.99 & 0.27 & 10 & 8 & 0.99 & 2.91 & 10 & 8 & 1.0 & 2.68  \\ 
  &  20  & 10 & 0.76 & 0.17 & 10 &  0.76 & 0.23 & 20 & 6 & 0.99 & 2.70 & 20 & 6 & 1.0 &  4.31 \\ 
  &  20  & 20 & 0.99 & 0.20 & 20 & 0.99 & 0.22 & 20 &  8& 0.99 & 6.83 & 20 & 8 & 1.0 & 5.13\\ 
 \cline{2-16} 
    {1D Linear}  &    10 & 10 & 0.77 & 0.19 & 10 & 0.77 & 0.39 & 10 & 6 & 0.98 & 2.86 & 10 & 6 & 1.0 & 2.18 \\
 \eqref{eq:1D_linear_unstable}  & 10 &  20 & 0.93 & 0.79  & 20 & 0.92 & 0.33 & 10 & 8 & 0.99 & 4.06 & 10 & 8 & 1.0 & 2.85\\ 
  &   20  & 10 & 0.77 & 0.27 & 10 &0.77 & 0.50 & 20 & 6 & 0.98 &  2.59 & 20 & 6 & 1.0 & 4.54\\
    &  20   & 20  & 0.91 & 0.44 & 20 & 1.0 & 0.27 & 20 & 8 & 0.99 & 6.58 &20 &  8 & 1.0 & 5.56 \\
\midrule
     {2D Contraction}      &  10 & 20 &  0.84 &   18.81 & 20 & 0.81 & 19.08 & 10  & 10 & 0.87 & 18.14 & 10 & 10 & 0.91 & 28.38 \\
     \texttt{Map 1} &  10 & 24 &  0.93 &  77.04  & 24 & 0.88 & 71.85 & 10 & 16 & 0.94 & 72.15 & 10 &  16 & 0.95 & 158.71\\
    &  20 & 20 & 0.79  & 40.99   & 20 & 0.81 & 19.23  & 20 & 10 & 0.95 & 58.49 & 20 & 10 & 0.95 & 89.01 \\
    &  20 & 24 &  0.89 &  70.85  & 24 & 0.89  & 71.18 & 20 & 16 & 0.95 &609.22 & 20 & 16 & 0.95 & 734.45 \\
    \midrule
      {3D}      &  10 &  10 & 0.00 & 6.94 & 10 & 0.00 & 7.67 & 10 & 6 & 0.55 & 59.02 & 10 & 6 & 0.98 & 69.30 \\  
      \texttt{Aircraft}  &  20 &  20 & 0.00 & 78.27 & 20 & 0.00 & 81.45 & 20 & 6 & 0.82 & 263.37 & 20 & 6 & 0.99 & 284.34 \\  
           & 30 & 24 & 0.70 & 405.56  & 24 & 0.90 & 417.33 & 30 & 6 & 0.99 & 1123.64 & 30 & 6 & 0.99 & 1277.89 \\
           \midrule
      {2D Polynomial}      & 10 & 10 & 0.56 & 358.12 & 10 & 0.54 & 352.12 & 10 & 6 & 0.72 & 194.57 & 10 & 6 & 0.79 & 207.45   \\
      \texttt{Room Temp} & 10 & 12 & 0.57 & 415.62 & 12 & 0.58 & 588.78 & 10& 8 & 0.81 & 453.12 & 10 & 8 & 0.83 & 539.39  \\
       \midrule
      {3D Polynomial}  & 10 & 6 & 0.00 & 2644.91 & 6 & 0.00 & 3112.58 & 10 & 4 & 0.49 & 2214.56 & 10 & 4 & 0.53 & 2175.78 \\
      \texttt{Room Temp} & 10 & 8 & 0.05 & 5147.82 & 8 & 0.00 & 5289.36 & 10 & 6 & 0.74 & 5988.77 & 10 & 6 & 0.79 & 5991.17\\
      \bottomrule
\end{tabular}
}
\end{table*}

\subsubsection{Comparison to \cite{vzikelic2023learning}}
{First, to illustrate the effectiveness of our formulation, particularly the use of the relaxation variable $\alpha$, we conduct a comparative evaluation against the framework proposed in \cite{vzikelic2023learning}}.
As that approach is restricted to time-invariant, infinite-horizon certification, the benchmark is performed exclusively in the time-invariant setting. To ensure a direct and equitable comparison, both methods are formulated as SOS programs. Notably, the scaling imposed by the martingale condition in~\cite{vzikelic2023learning} reduces the flexibility of the resulting certificate. This structural restriction manifests in the empirical results reported in Table~\ref{tab:benchmark3}, which clearly demonstrate the advantage of our formulation.

Specifically, the results in Table~\ref{tab:benchmark3} show that for the 1D linear system, our time-invariant approach consistently produces significantly stronger certificates than~\cite{vzikelic2023learning}. At degree 8, our method achieves $\low{P}_\safe = 0.76$ compared to $0.16$, and at degree 10 it improves to $0.96$ while the baseline remains at $0.17$. The computational times are of the same order of magnitude for both approaches. 

For the 2D contraction benchmark, both methods show improvement as the polynomial degree increases; however, clear performance differences emerge as the initial set becomes more challenging. For $X_{\initial}^1$, our approach consistently produces substantially stronger certificates and succeeds in cases where the method of~\cite{vzikelic2023learning} fails. For $X_{\initial}^2$, the gap further widens, with our method steadily improving the certified lower bound as the degree increases, while the baseline approach yields noticeably weaker certificates. This trend is even more pronounced for $X_{\initial}^3$, where our approach continues to provide meaningful and progressively improving certificates, whereas the method of~\cite{vzikelic2023learning} delivers significantly weaker results and ultimately fails at higher degrees. The computational times are overall comparable across the two approaches.

\subsubsection{Comparison to \cite{xue2025finite}}

Next, to further illustrate the effectiveness of our formulation, we also compare against the framework in \cite{xue2025finite} using the results reported in their Example~2, as no implementation is publicly available. We replicate the same setup and evaluate both our time-invariant and time-varying formulations under an identical horizon of $H=50$. Since computation time is not reported in~\cite{xue2025finite}, we can only compare the probability bounds in Table~\ref{tab:benchmark2}.


\begin{table}[t]
\centering
\caption{Probability bound comparison against \cite{xue2025finite}. 
\texttt{TI} and \texttt{TV} are time-invariant and -varying, and
\texttt{deg} is the degree of SOS polynomial.
}
\resizebox{\linewidth}{!}{
\begin{tabular}{c|ccccccccc}
\toprule
\texttt{deg} & 2 & 4 & 6 & 8 & 10 & 12 & 14 & 16 & 18 \\
\midrule
\cite{xue2025finite}  & 0.16 & 0.28 & 0.35 & 0.37 & 0.46 & 0.52 & 0.57 & 0.58 & 0.61 \\
Our \texttt{TI} & \textbf{0.28}  & 0.28 & 0.37 & 0.38 & \textbf{0.71} & \textbf{0.73} & \textbf{0.73} & \textbf{0.73} & \textbf{0.75} \\
Our \texttt{TV} & \textbf{0.28} & \textbf{0.62} & \textbf{0.78} & \textbf{0.78} & - & - & - & - & - \\
\bottomrule
\end{tabular}
}
\label{tab:benchmark2}
\end{table}

As shown in Table~\ref{tab:benchmark2}, our time-invariant formulation consistently achieves stronger lower bounds on $P_\reach$ across all polynomial degrees. While both approaches improve with increasing degree, our method exhibits a marked gain at moderate degrees (e.g., $d \geq 10$), where the certified probability increases significantly compared to \cite{xue2025finite}, indicating reduced conservatism.
Moreover, the time-varying formulation provides substantial additional improvement, achieving significantly tighter bounds at much lower degrees. 


\subsubsection{Verification}

\begin{table*}[t]
\caption{Reach-avoid \texttt{synthesis} results for finite- and infinite-time horizons. \texttt{deg} is the degree of SOS polynomial, $\low{P}_\reach$ is the obtained lower-bound on $P_\reach$, and $\tau$ is the computation time in seconds.
}
\label{tab:reach-avoid-2}
\centering
\scalebox{1}
    {
\begin{tabular}{c|cccr | ccr | cccr | cccr}
\toprule 
 \toprule  
   & & \multicolumn{6}{c}{$\text{Time-Invariant Certificate}$}  
   & \multicolumn{8}{c}{\text{Time-Varying Certificate}}  \\
 \multirow{ 2}{*}{Model} & 
 & \multicolumn{3}{c}{$\texttt{Finite-Horizon}$}  
   & \multicolumn{3}{c}{\texttt{Infinite-Horizon}}  
    & \multicolumn{4}{c}{$\texttt{Finite-Horizon}$}  
   & \multicolumn{4}{c}{\texttt{Infinite-Horizon}}  \\
   \\
  & $H$ & \texttt{deg} &   $\low{P}_\reach$ & $\tau(s) \ $ 
   & \texttt{deg} & $\low{P}_\reach$   & $\tau(s) \ $ 
  & $H$ & \texttt{deg} &   $\low{P}_\reach$ & $\tau(s) \ \ $ 
   & \texttt{k} & \texttt{deg} & $\low{P}_\reach$   & $\tau(s)$ \ \ \\
 \midrule
  {1D Linear}   & 10 &  10 & 0.92 & 1.68 & 10 & 0.93 & 2.43 & 10 & 2 & 0.98 & 5.76 & 10 & 2  & 0.98 & 6.17 \\
  \eqref{eq:1D_linear_stable}   & 10 & 20 & 0.99  &3.56 & 20  & 0.99 &  3.78 & 10 & 4 & 0.99 & 10.72 & 10 & 4 & 0.99 & 11.01 \\ 
  & 20 & 10 & 0.93 & 1.89 & 10 & 0.93 & 2.58 & 20 & 2 & 0.99 & 17.84 & 20 & 2 & 0.99 & 18.18  \\ 
  &20 & 20 & 0.99 & 4.53 & 20 & 0.99 & 4.78 & 20 & 4 & 0.99 & 28.62 & 20 & 4 & 0.99 & 33.34\\ 
 \cline{2-16} 
    {1D Linear}  & 10& 10& 0.84& 1.15 & 10 & 0.83 & 2.49 & 10 & 2 & 0.97 & 5.14 & 10 & 2 & 0.98 & 5.99\\
 \eqref{eq:1D_linear_unstable}   & 10 & 20 & 0.98 & 3.89 & 20 & 0.98 & 3.45 & 10 & 4 & 0.99 & 10.45 & 10 & 4 & 0.99 & 11.59\\ 
  &20 & 10 &0.85 & 1.95& 10& 0.85& 2.97 & 20 & 2 & 0.99 & 18.56 & 20 & 4 & 0.99 & 18.99 \\
    &  20& 20& 0.99&5.89 & 20& 0.99& 5.78 & 20 & 4 & 0.99 & 29.13 & 20 & 4 & 0.99 & 31.33\\
\midrule
     {2D Contraction} & 10  & 20 & 0.88  & 32.19 &  20 &  0.88 & 32.70 & 10 & 4 & 0.92 & 59.45 & 10 & 4 & 0.93 & 63.63\\
     \texttt{Map 1} & 10 & 24 & 0.99  & 98.41 & 24 & 0.99 & 98.43 & 10 & 6 & 0.99 & 108.88 & 10 & 6 & 0.99 & 117.14 \\
     & 20 & 20 & 0.88 & 30.19  &  20 & 0.88 & 35.91 & 20 &  4 & 0.99 & 388.24 & 20 & 4 & 0.99 & 400.14 \\
    & 20 & 24& 0.99 &  101.18 & 24 & 0.99 & 108.25 & 20 & 6 & 0.99 & 924.34 & 20 & 6 & 0.99 & 968.42 \\
\cline{2-16} 
    {2D Contraction}  & 10 & 20 & 0.76 & 58.72 & 20 & 0.76 & 88.72 & 10 & 4 & 0.85 & 68.15 & 10 & 4 & 0.86 & 70.06 \\
 \texttt{Map 2}   & 10 & 24 & 0.77 & 123.23 & 20 & 0.77 & 99.17 & 10 & 6 & 0.91 & 115.83 & 10 & 6 & 0.91 & 142.18 \\ 
   & 20 & 20 & 0.76 & 68.45& 20 & 0.76 & 77.52 & 20 & 4 & 0.92 & 426.27 & 20 & 4 & 0.92 & 455.89 \\
    & 20 & 24 & 0.77 & 115.68 & 20 & 0.77 & 119.79 & 20 & 6 & 0.93 & 1075.69 & 20 & 6 & 0.93 & 1089.23\\
    \midrule
{3D} & 10 & 4 & 0.00 & 714.34 & 4 & 0.00 & 698.67 & 10 & 2 & 0.87 & 472.43 & 10 & 2 & 0.89 & 485.31\\
\texttt{Aircraft} & 10 & 6 & 0.00 & 1318.51 & 6 & 0.00 & 1367.48 & 10 & 4 & 0.90 & 1599.23 & 10 & 4 & 0.91 & 1685.34\\
  \midrule
    {2D Polynomial}& 10 & 10 & 0.82 & 374.87 & 10 &0.81  & 399.45 & 10 & 4 & 0.85 & 1235.67 & 10 & 4 & 0.85 & 1344.22 \\
 \texttt{Room Temp} & 10 & 12 & 0.89 & 592.13 & 10 & 0.88 & 623.48 & 10 & 6 & 0.91 & 2744.10 & 10 & 6 & 0.92 & 2800.05\\ 
 \midrule
    {3D Polynomial}  & 10 & 4 & 0.00 & 855.14 & 4 & 0.00 & 928.59 & 10 & 2 & 0.39 & 511.20 & 10 & 2 & 0.40 & 577.43 \\
     \texttt{Room Temp} & 10 & 6 &  0.30 & 7855.12 & 6 & 0.31  & 7422.36 & 10 & 4 & 0.75 & 5127.89 & 10 & 4 & 0.76 & 5344.77\\
    & 10 & 8 & 0.42 & 8454.23 & 8 & 0.41 & 8454.57 & 10 &  6  & 0.99 & 14013.56 & 10 & 6 & 0.99 & 14083.65\\ 
 \bottomrule
\end{tabular}
}
\end{table*}

Table~\ref{tab:reach-avoid-1} summarizes the reach-avoid verification results using time-invariant and time-varying certificates.
For the 1D systems, the SOS-based lower-bound reach probabilities closely match empirical Monte Carlo estimates, with higher polynomial degrees improving the bounds while keeping computation times modest. In 2D, the contraction map exhibits more challenging dynamics due to a stronger rotational component, yet the SOS certificates achieve values consistent with Monte Carlo, demonstrating their ability to capture the stochastic behavior accurately. 

For 3D linear and polynomial systems, the verification problem becomes more demanding, yet especially time-varying certificates still provide rigorous lower bounds that capture nontrivial reach-avoid behavior.  Note that in the case of 3D aircraft, a certificate of degree 24 is needed to get a nontrivial probability bound.  However, with time-varying certificate, even degree 6 produces very high lower bound of $0.98$ at about $70s$ compute time.

In fact, time-varying certificates consistently yield higher lower-bound probabilities compared to single-certificate formulations, often achieving similar or better bounds with lower-degree polynomials. This improvement generally comes at the cost of increased computational effort, but it is demonstrated that time-varying certificates are a practical approach for obtaining tighter guarantees without excessively increasing polynomial degrees.

A representative certificate for the 2D contraction map is shown in Fig.~\ref{fig:certificate}, illustrating high values within the reach and initial sets and a sharp decay toward the unsafe region. 
Visualization of the
time-varying certificates for the 1D Linear System~\eqref{eq:1D_linear_unstable}, are provided in Appendix~\ref{app:cert}. These results highlight that SOS-based verification can provide rigorous finite and infinite-time horizon \emph{reach-avoid} guarantees.

\subsubsection{Control Synthesis}
{Finally}, we consider the reach-avoid joint feedback control and certificate synthesis problem for the same systems. Table~\ref{tab:reach-avoid-2} reports the lower-bound reach-avoid probabilities obtained when designing \emph{linear feedback} controllers over finite- and infinite-time horizons, for both time-invariant and time-varying formulations. Fig.~\ref{fig:mc-synthesis} shows Monte Carlo simulations.

For the 1D linear systems, the synthesized controllers significantly increase the reach probability (especially for time-invariant certificates), bringing the SOS-based guarantees very close to the newly simulated empirical Monte Carlo probabilities. Time-varying certificates further improve the lower bounds, often achieving similar or better performance with very lower-degree polynomials (e.g., degree 2), although at the cost of longer computation times.

\begin{figure}[t]
    \centering
    \includegraphics[width=0.7\columnwidth]{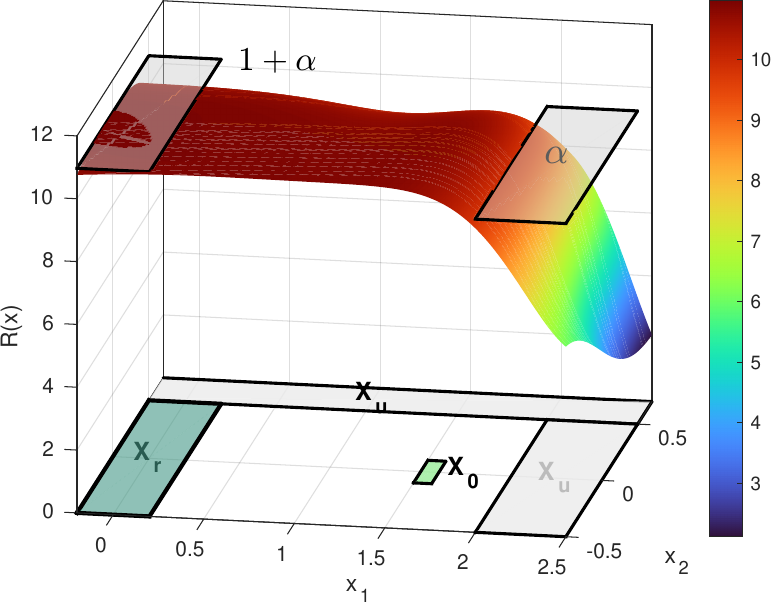}
    \caption{Reach-avoid certificate for Contraction Map 1. 
    }
    \label{fig:certificate}
\end{figure}

In 2D, for Contraction Map 1, the controller improve $P_r$.  Note that for both cases of time-invariant certificate (both finite- and infinite-horizon), degree 24 polynomial is needed to achieve $P_r \geq 0.99$, whereas the time-varying certificate can achieve that with degree $6$ at the additional cost of $10s$ computation time.

Contraction Map 2 substantially benefits from linear feedback controller, nearly matching the Monte Carlo estimate, that increased from $P_\reach \approx 0.19$ to $P_\reach \approx 0.95$.  In this case, the time-invariant certificate, even with degree 24, performs poorly ($P_s \geq 0.77)$, whereas the time-varying certificate can reach to $0.93$ with degree 6 but at the cost of 18 min computation time.

Across the 3D aircraft and room-temperature systems, time-varying certificates consistently yield higher lower-bound probabilities than single-certificate synthesis, demonstrating the advantage of time-varying formulations in higher-dimensional settings.
Overall, these results show that linear feedback control can meaningfully enhance reach probabilities relative to open-loop dynamics. Figure~\ref{fig:mc-synthesis} presents the empirical Monte Carlo reach probabilities for the synthesized controllers across several benchmark systems, as well rolled-out trajectories under the policies, illustrating how the controllers concentrate trajectories within the reach and initial sets, while reducing excursions toward unsafe states.

\subsubsection{Overall Observations}
These empirical evaluations show that time-varying certificates are more effective for high-dimensional systems. This is because time-invariant certificates typically require higher-degree polynomials to produce nontrivial probabilistic lower bounds. However, the number of decision variables in a polynomial grows exponentially with both the polynomial degree and the state dimension, making the synthesis of high-degree polynomials in large-dimensional spaces computationally prohibitive.

In contrast, time-varying certificates can achieve nontrivial probabilistic lower bounds using low-degree polynomials, albeit over a larger number of time steps. Since increasing the number of time steps scales the certificate parameters only linearly, time-varying certificates offer substantially better scalability and performance than time-invariant ones in high-dimensional settings.

\section{Conclusion}

This paper presents a reach-avoid certificate framework for discrete-time, continuous-space stochastic systems over both finite- and infinite-horizon settings. We formulate reach-avoid analysis as a convex sum-of-squares optimization problem, enabling both verification of given controllers and joint synthesis of optimal feedback controllers with corresponding time-invariant and time-varying certificates. The proposed approach provides theoretical guarantees for safety and goal-reaching performance, and its effectiveness is demonstrated across a range of linear and nonlinear stochastic systems through benchmark case studies. Overall, this framework offers a practical and rigorous tool for certifying and controlling stochastic systems in continuous state and action spaces.

\appendix
\subsection{Certificate Visualization}
\label{app:cert}


The time-varying certificates of the unstable linear system is visualized in Fig.~\ref{fig:certificates}. The first snapshot exhibits values below the corresponding $\alpha$ throughout both the safe and unsafe sets, reflecting the intended initialization of the certificate. By the final horizon, the trend aligns with the previous description: the certificate is high in the reach set, low in the unsafe set, and emphasizes the initial set where the corresponding $\gamma$ is maximized. These plots illustrate that both time-invariant and time-varying certificates effectively capture the reach-avoid optimization structure.


\begin{figure}[h!]
    \centering
    \includegraphics[width=0.15\textwidth]{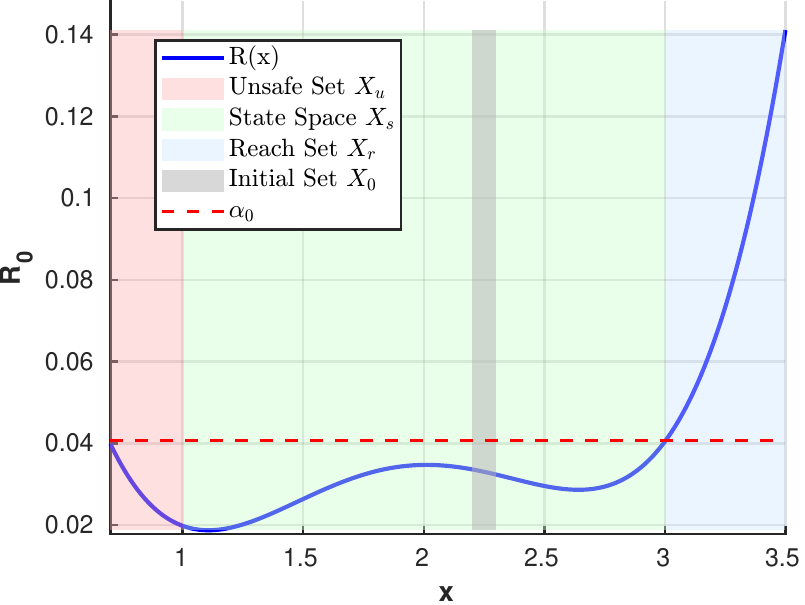}\hfill
    \includegraphics[width=0.15\textwidth]{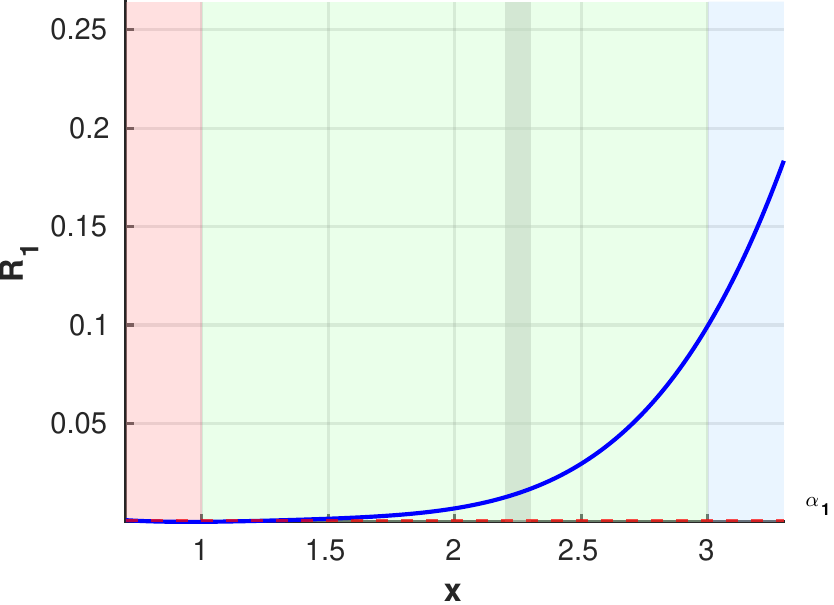}\hfill
    \includegraphics[width=0.15\textwidth]{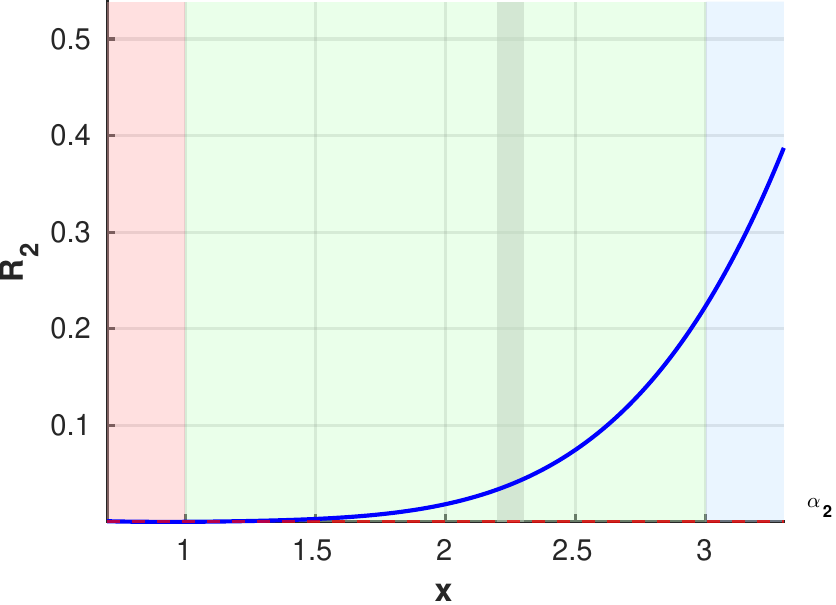}\\[3pt]

    \includegraphics[width=0.15\textwidth]{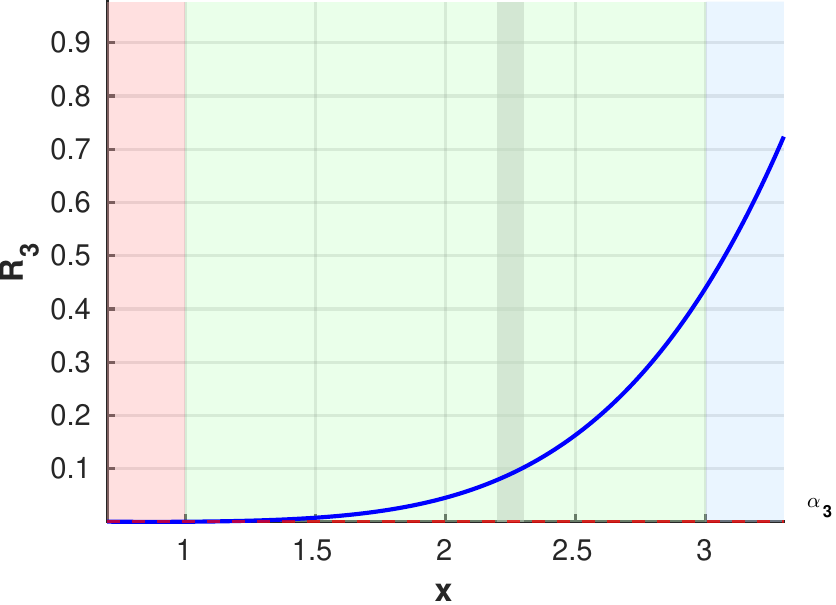}\hfill
    \includegraphics[width=0.15\textwidth]{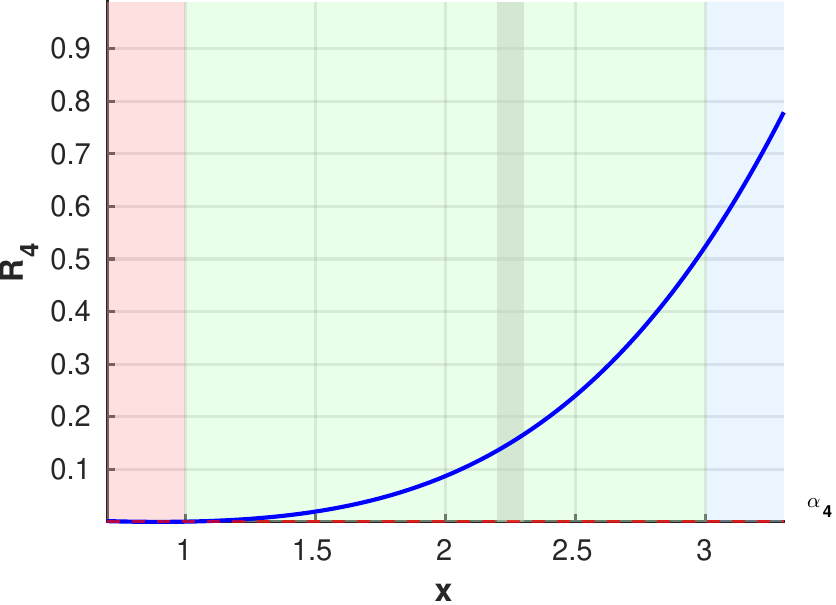}\hfill
    \includegraphics[width=0.15\textwidth]{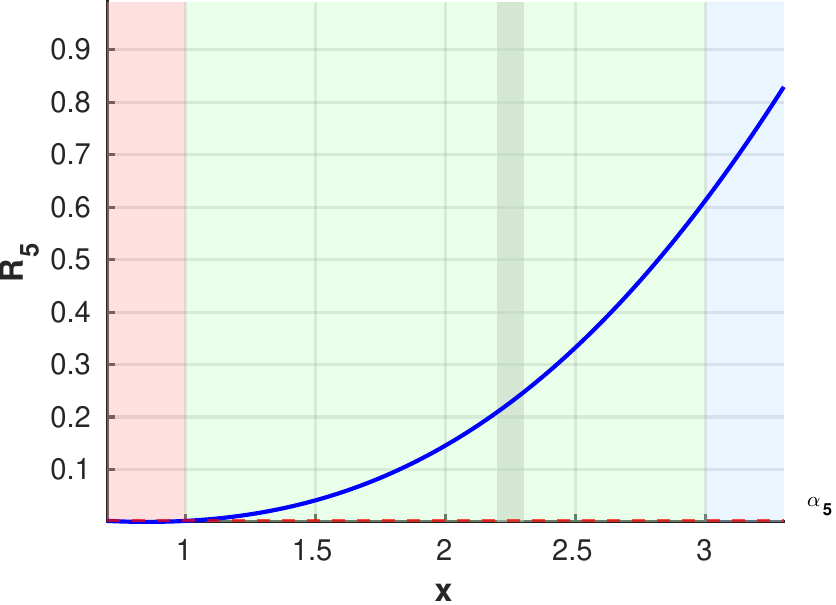}\\[3pt]

    \includegraphics[width=0.15\textwidth]{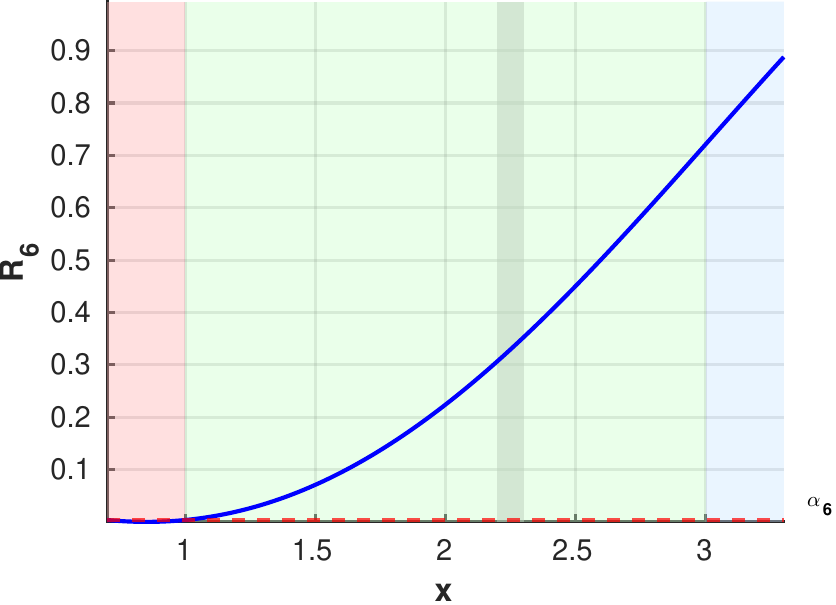}\hfill
    \includegraphics[width=0.15\textwidth]{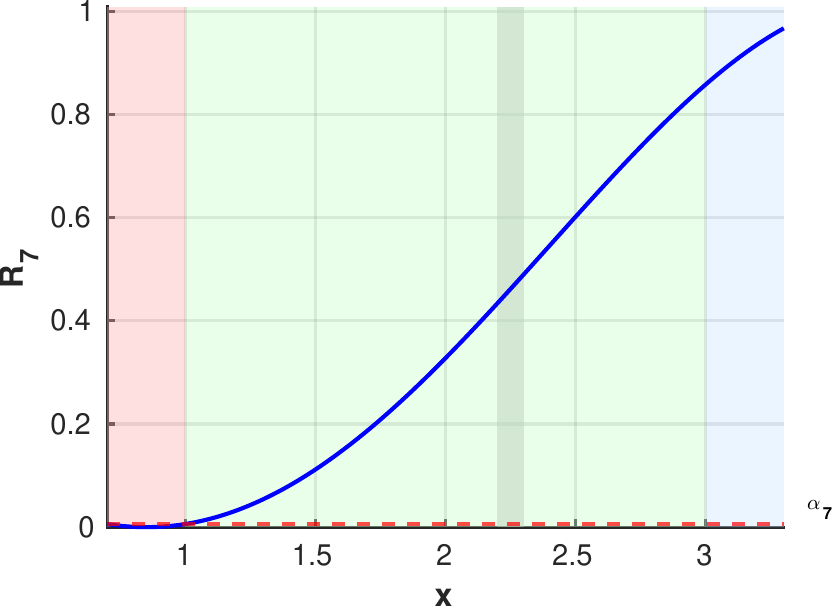}\hfill
    \includegraphics[width=0.15\textwidth]{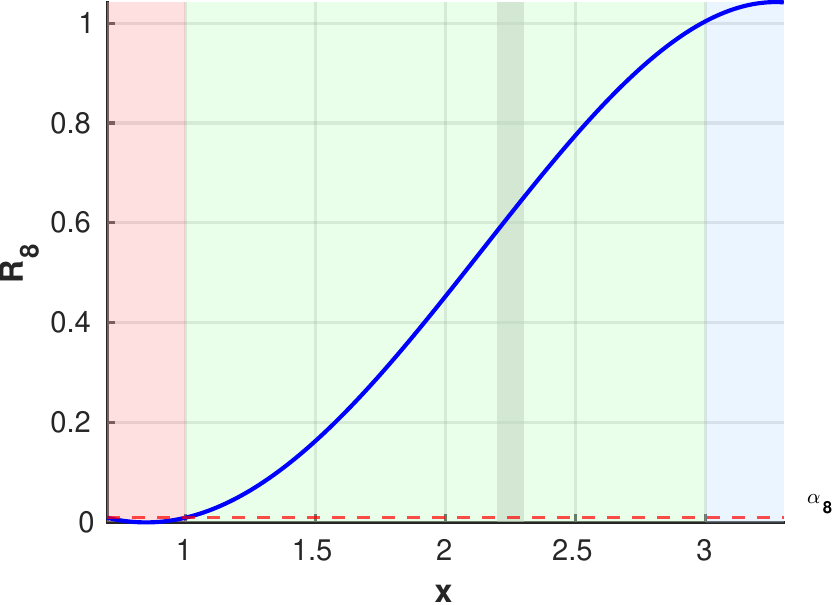}\\[3pt]

    \includegraphics[width=0.15\textwidth]{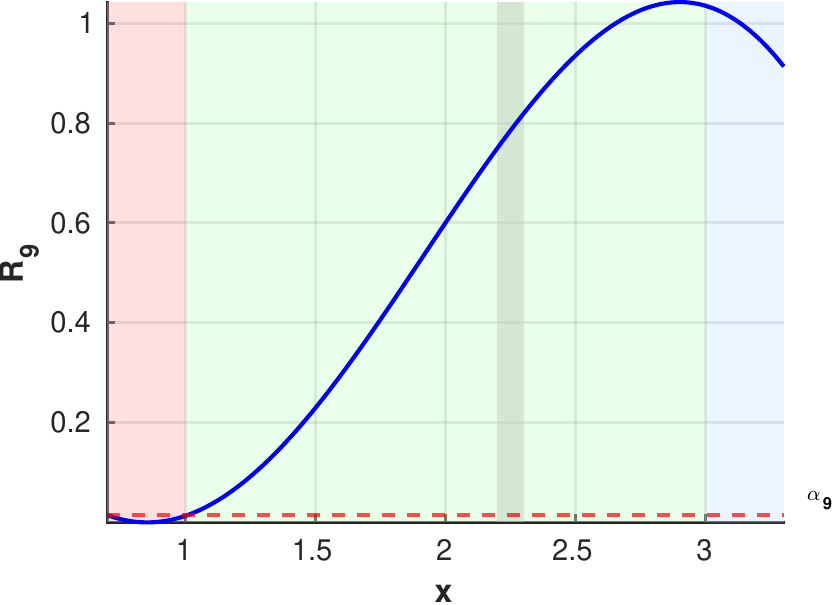}\hfill
    \includegraphics[width=0.15\textwidth]{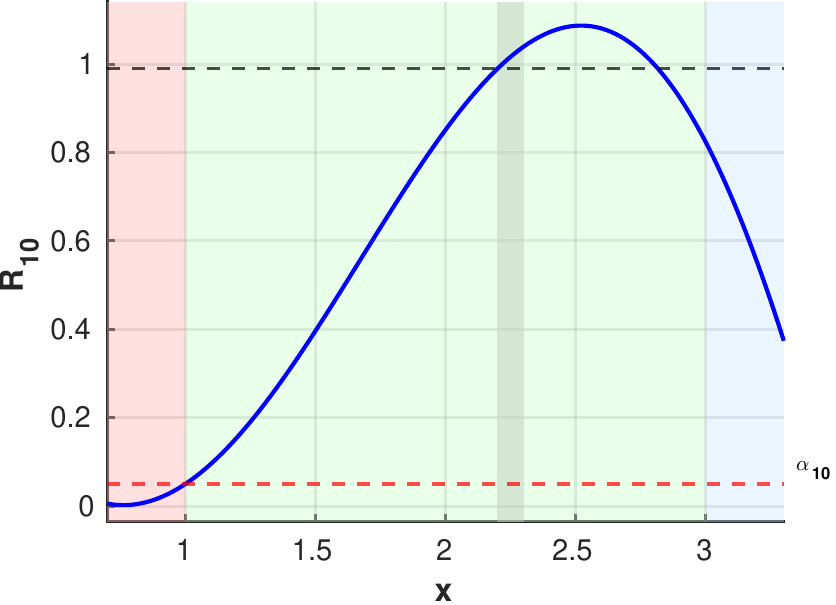}

    \caption{Time-varying certificate plots for horizon $H=10$. Certificate evolution over time: initially below $\alpha_0$ in both safe and unsafe sets, then by horizon $H$, low values in the unsafe set, and high near the initial set where $\gamma$ is maximized.}
    \label{fig:certificates}
\end{figure}

\subsection{Aircraft Model }
\label{parameters}

For the aircraft model, the following parameters are used:  
Mass: $m = 1.9\,\mathrm{kg}$; \quad 
Pitch-axis inertia: $I_{yy} = 0.025\,\mathrm{kg\,m^2}$; \quad
Trim airspeed: $U_0 = 13.5\,\mathrm{m/s}$; \quad
Trim pitch angle: $\theta_0 = 0$; \quad
Gravity: $g = 9.81\,\mathrm{m/s^2}$.
    
Longitudinal derivatives:
$X_u = -0.30\,\mathrm{kg/s}$; \quad
$X_w = 2.8\,\mathrm{kg/s}$; \quad
$Z_u = -5.2\,\mathrm{kg/s}$; \quad
$Z_w = -15.0\,\mathrm{kg/s}$; \quad
$Z_{\dot w} = -1.8\,\mathrm{kg}$; \quad
$Z_q = -3.9\,\mathrm{kg\,m/s}$; \quad
$M_u = -0.0025\,\mathrm{N\,m\,s/m}$; \quad
$M_w = -0.08\,\mathrm{N\,m\,s/m}$; \quad
$M_q = -0.50\,\mathrm{N\,m\,s}$; \quad

Further, the aircraft operates on the safe set $X_\safe = [1, 3]^3$, starting from initial region
$X_0 = [1.5, 1.6] \times [2.5, 2.6]] \times [1.5, 1.6]$. The reach set is located at $X_\reach = [2.7, 3.0] \times [1.0, 3.0] \times [1.0, 3.0]$.


\section*{References}
\bibliographystyle{IEEEtran}
\bibliography{TAC/cite}

\begin{IEEEbiography}[{\includegraphics[width=1in,height=1.25in,clip,keepaspectratio]{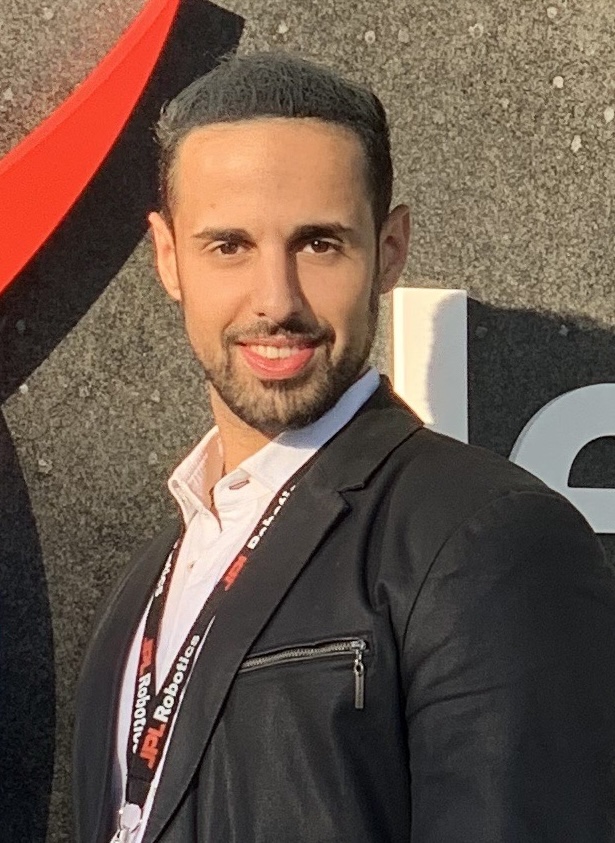}}]{Rayan Mazouz} (Member, IEEE)
received his B.Sc. and M.S.c degrees in Aerospace Engineering from Delft University of Technology. Currently, he is persuing a Ph.D. at the University of Colorado in Boulder. His research primarily focuses on the formal verification and control synthesis of stochastic dynamical systems. 
\end{IEEEbiography}

\begin{IEEEbiography}[{\includegraphics[width=1in,height=1.25in,clip,keepaspectratio]{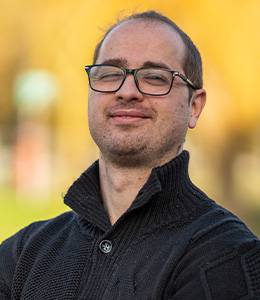}}]{Luca Laurenti} (Member, IEEE) is a tenure track assistant professor at the Delft Center for Systems and Control at TU Delft and head of the Research on Robust and Intelligent Autonomous Systems Lab at AI4I. He received his PhD from the Department of Computer Science, University of Oxford (UK), where he was a member of the Trinity College. Luca has a background in stochastic systems, control theory, formal methods, and artificial intelligence. His research work focuses on developing data-driven systems that are provably robust to interactions with a dynamic and uncertain world.
\end{IEEEbiography}

\begin{IEEEbiography}[{\includegraphics[width=1in,height=1.25in,clip,keepaspectratio]{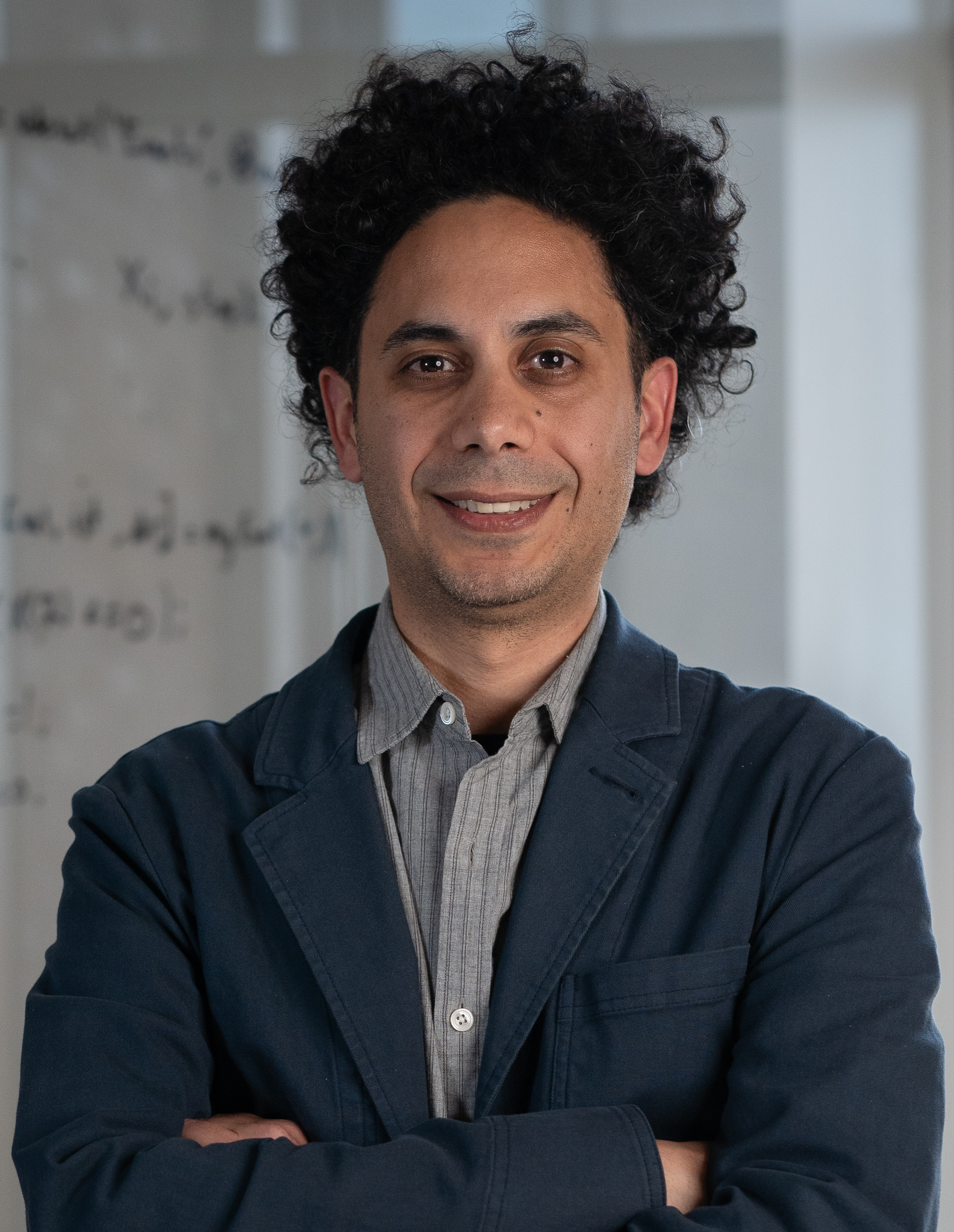}}]{Morteza Lahijanian} (Member, IEEE)
is an Associate Professor in the Aerospace Engineering Sciences department, an affiliated faculty at the Computer Science department and Robotics Program, and the director of the Assured, Reliable, and Interactive Autonomous (ARIA) Systems group at the University of Colorado Boulder. 
His awards include Ella Mae Lawrence R. Quarles Physical Science Achievement Award, Jack White Engineering Physics Award, outstanding junior faculty award, and NSF GK-12 Fellowship. 
Dr. Lahijanian's research interests span the areas of control theory, stochastic hybrid systems, formal methods, machine learning, and game theory with applications in robotics, particularly, motion planning, strategy synthesis, model checking, and human-robot interaction. 
\end{IEEEbiography}

\end{document}